\def\pa{{\rm pa}}
\def\ch{{\rm ch}}
\def\tmax{t_{\rm max}}
\def\V{{V}}
\def\Pr{\mathbb{P}}
\def\e{{\rm e}}
\def\p{^\prime}
\def\d{{\rm d}}
\begin{document}

\title{Particle Gibbs algorithms for Markov jump processes}

\author{\name B{\l}a{\.z}ej Miasojedow \email bmia@mimuw.edu.pl	 \\
       \name Wojciech Niemiro \email wniem@mimuw.edu.pl\\
        \addr Institute of Applied Mathematics, University of Warsaw\\ 
Banacha 2, 02-097 Warsaw,  Poland\\ }

\editor{}

\maketitle

\begin{abstract}
In the present paper we propose a new MCMC algorithm for sampling from the posterior distribution of hidden trajectory of 
a Markov jump process. Our algorithm is based on the idea of exploiting virtual jumps, introduced  by \citet{RaoTeh2013a}. 
The main novelty is that our algorithm uses particle Gibbs with ancestor sampling (PGAS, see \citet{andrieu2010particle,lindsten2014particle}) to update the skeleton, 
while Rao and Teh use forward filtering backward sampling (FFBS). In contrast to previous methods our algorithm can be 
implemented even if the state space is infinite.  In addition, the cost of a single step of the  proposed algorithm 
does not depend on the size of the state space. The computational cost of our methood is of order 
$\mathcal{O}(N\mathbb{E}(n))$, where $N$ is the  number of particles used in the PGAS algorithm and 
$\mathbb{E}(n)$ is the expected number of jumps (together with virtual ones).
The cost of the  algorithm of Rao and Teh is of order $\mathcal{O}(|\mathcal{X}|^2\mathbb{E}(n))$, where $|\mathcal{X}|$
is the size of the state space. Simulation results show that our algorithm with PGAS converges slightly slower than the algorithm with FFBS, 
if the size of the state space is not big. However, if the size of the state space increases, the proposed method outperforms existing ones. 
We give special attention to a hierarchical version of our algorithm which can be applied to continuous time Bayesian networks (CTBNs).
\end{abstract}

\begin{keywords}
Continuous time Markov processes, Bayesian networks, MCMC, Sequential Monte Carlo, Hidden Markov models, Posterior sampling, CTBN \end{keywords}

\section{Introduction}
Markov jump processes (MJP) are natural extension of Markov chains to continuous time. 
They are widely applied in modelling of the phenomena of chemical, biological,
economic and other sciences.  An important class of MJP are continuous time Bayesian networks (CTBN) 
introduced by \citet{Sch} under the name of composable Markov chains and then reinvented
by \citet{Nod1} under the current name. Roughly, a CTBN is a multivariate MJP in which the dependence structure 
between coordinates can be described by a graph. Such a graphical representation 
allows for decomposing a large intensity matrix into smaller conditional intensity matrices.   

In many applications it is necessary to consider a situation where the trajectory of a Markov jump process 
is not observed directly, only partial and noisy observations are available. Typically, 
the posterior distribution over trajectories is then analytically intractable. 
The present paper is devoted to MCMC methods for sampling from the posterior in such a situation. 

In the literature there exist several approaches 
to the above mentioned problem: based on sampling 
\citep{BoysWilkKirk2008,EFK,FaSh,FarSher2006,hobolth2009,Nod2,RaoTeh2013a,rao2012mcmc,CTBNMet2014}, based on 
numerical approximations \citep{cohn2010mean,Nod1,Nod3,opper2008variational}. Some of these methods are 
inefficient, like modification of likelihood weighting \citep{Nod2}. 
Other approaches involve expensive computations like matrix exponentiation, spectral decomposition of matrices, 
finding roots of equations. There are also approximate algorithms based on time 
discretization. To the best of our knowledge the most general, efficient and exact method is that proposed by 
\citet{RaoTeh2013a}, and extended to a more general class
of continuous time discrete systems in \citet{rao2012mcmc}. Their algorithm is based on introducing so-called 
virtual jumps and a thinning procedure for Poisson processes. In our approach we combine this method 
with particle MCMC discovered by \citet{andrieu2010particle}. More precisely, instead of forward filtering 
backward sampling algorithm used in the original version, we use particle Gibbs \citep{andrieu2010particle}
with added ancestor resampling  proposed in  \citet{lindsten2012ancestor,lindsten2014particle}.
The proposed method is computationally less expensive. 
Moreover, our algorithm can be directly applied when the state space is infinite, 
in opposition to \citet{rao2012mcmc,RaoTeh2013a}. 


 
\section{Markov jump processes}\label{sec:MPJ}

Consider a continuous time stochastic process $\{X(t),{t\geq 0}\}$  defined on a probability space 
$(\Omega,\mathcal{F},\mathbb{P})$ with a discrete state space $\mathcal{X}$.
Assume the process is time-homogeneous Markov with transition probabilities
\[P^t(s,s^\prime)=\mathbb{P}(X(t+u)=s^\prime|X(u)=s)\;,\]
for $s,s^\prime\in\mathcal{X}$. The initial distribution is denoted by $\nu(s)=\mathbb{P}(X(0)=s)$. 
Since $\mathcal{X}$ is discrete,  $\nu$ can be viewed as a vector and $P^t$ as a matrix
(both possibly infinite). The intensity matrix is defined as follows
\[ Q(s,s^\prime)=\lim_{t\to0}\frac{1}{t}[P^t(s,s^\prime)-I(s,s^\prime)]\;,\]
where $I=P^0$ is the identity matrix. Equivalently, $Q(s,s^\prime)$ is the intensity of jumps from $s$ to $s^\prime$, i.e.
\begin{align*}
 \mathbb{P}\left(X(t+ \d t)=s^\prime| X(t)=s\right)&=Q(s,s^\prime)\d t \text{ for } s\neq s^\prime\;,\\
 \mathbb{P}\left(X(t+\d t)=s| X(t)=s\right)&=1-Q(s)\d t\;, 
 \end{align*}
where $Q(s)=-Q(s,s)=\sum_{s^\prime\neq s} Q(s,s^\prime)$ denotes the intensity of leaving state $s$. Clearly we have 
$\sum_{s^\prime} Q(s,s^\prime)=0$ for all $s\in\mathcal{X}$. 
Throughout this paper we assume that $Q$ is non-explosive \citep{Nor98}, which means that almost surely only finite number of jumps occur 
in any bounded time interval $[0,\tmax]$.
This assumption is fulfilled in most applications we have in mind. Sufficient and necessary conditions for $Q$ to be non-explosive can be found in 
\citep{Nor98}[Thm. 2.7.1, 2.7.2, Cor. 2.7.3]. 
From now on the interval  $[0,\tmax]$ is fixed. Let there be $m$ jumps and let these jumps occur at ordered moments $T=(t_1,\dots, t_m)$.
Moments of jumps $T$ with a corresponding sequence of states, denoted by $S=(s_0,s_1,\ldots,s_m)=(X(0),X(t_1),\dots,X(t_m))$, 
fully describe the trajectory $X([0,\tmax])$.
By definition of the  process $\{X(t)\}$, every interval between jumps, $t_j-t_{j-1}$ for $j=1,\dots,m$, with the convention $t_0=0$, is distributed according 
to the exponential distribution with parameter $Q(s_{j-1})$. The skeleton $S$ is a discrete time Markov chain with initial distribution $\nu$
and transition matrix given by
\[
\begin{cases}
 \displaystyle\frac{Q(s,s^\prime)}{Q(s)}&\text{if } s\neq s^\prime\;;\\
 0&\text{if } s= s^\prime\;.
\end{cases}
\]
Thus random variable $(T,S)$ has density
\begin{align}\label{eq:density}
 p(T,S)&=\nu(s_0)\prod_{j=1}^m Q(s_{j-1})\exp\left\{-Q(s_j)(t_j-t_{j-1})\right\}\frac{Q(s_{j-1},s_j)}{Q(s_{j-1})} 
                   \exp\left\{-Q(s_m)(\tmax-t_m)\right\}\nonumber\\
 &=\nu(s_0)\prod_{j=1}^m Q(s_{j-1},s_j)\exp\left\{-Q(s_j)(t_j-t_{j-1})\right\}\exp\left\{-Q(s_m)(\tmax-t_m)\right\}\;,
\end{align}
where $m=|T|$. The last factor $\exp\left\{-Q(s_m)(\tmax-t_m)\right\}$ comes from the fact that the waiting time for jump 
$m+1$ can be arbitrary but greater than $\tmax-t_m$.  

\section{Virtual jumps}\label{sec:Virtual}

Let $\{X(t)\}$ be a homogeneous Markov process with intensity matrix $Q$  and let $R(s)\geq Q(s)$ for all $s\in\mathcal{X}$. 
Consider the following sampling scheme \citep{rao2012mcmc}, based on dependent thinning, i.e.\ rejection sampling for 
an inhomogeneous Poisson process \citep{1979}.
We generate a sequence of potential times of jumps $(\tilde{t}_1,\tilde{t}_2,\ldots)$.
For a given moment $\tilde{t}_{k-1}$ and a current value of the process $\tilde{s}_{k-1}=X(\tilde{t}_{k-1})$, 
we draw the next time interval $\tilde{t}_k-\tilde{t}_{k-1}$ from the exponential distribution with parameter $R(\tilde{s}_{k-1})$. 
With probability ${Q(\tilde{s}_{k-1})}/{R(\tilde{s}_{k-1})}$ the process jumps at time $\tilde{t}_k$ to another state, and this new state is 
$\tilde{s}_k$ with probability ${Q(\tilde{s}_{k-1},\tilde{s}_k)}/{Q(\tilde{s}_{k-1})}$. With probability 
$(1-Q(\tilde{s}_{k-1}))/{R(\tilde{s}_{k-1})}$ the process does not jump and
we put $\tilde{s}_{k}=\tilde{s}_{k-1}$. The resulting redundant skeleton $\tilde{S}=(\tilde{s}_0,\tilde{s}_1,\tilde{s}_2,\ldots)$ is therefore a Markov chain with  
transition matrix $P$ defined by
\begin{equation}
 \label{eq:matP}
 P(s,s^\prime)=\begin{cases}
                \displaystyle\frac{Q(s,s^\prime)}{R(s)}&\text{  if } s\neq s^\prime\;;\\
                \\
                1- \displaystyle\frac{Q(s)}{R(s)}&\text{ if } s= s^\prime\;.
               \end{cases}
\end{equation}
We summarize this procedure as the following algorithm~\ref{alg:rej}.

\begin{algorithm}[H]
\caption{Thinning procedure.}
\label{alg:rej}
\begin{algorithmic}
 \STATE Set $\tilde{t}_0=0$ and $k=0$.
 \STATE Draw $\tilde{s}_0\sim\nu(\cdot)$.
 \WHILE{$\tilde{t}_k<\tmax$}
 \STATE Set $k=k+1$.
 \STATE Draw  $W\sim {Exp}(R(\tilde{s}_{k-1}))$.
 \STATE Set $\tilde{t}_k=\tilde{t}_{k-1}+W$.
 \STATE Draw $\tilde{s}_k\sim P(\tilde{s}_{k-1},\cdot\;)$.
 \ENDWHILE
\end{algorithmic}
\end{algorithm}
As before, consider the process in a fixed interval of time $[0,\tmax]$. Let $\tilde{T}=(\tilde{t}_1,\tilde{t}_2,\ldots,\tilde{t}_n)$ be the set of
moments generated by the above algorithm. Let  $J=\{k>1: \tilde s_k\neq \tilde s_{k-1}\}$. Denote by $T=\tilde T_J$ and $V=\tilde T_{-J}$  
the moments of true jumps and virtual jumps, respectively. The process $\{X(t)\}$ resulting from the algorithm has the 
same probability distribution as that in Section \ref{sec:MPJ}. This fact is explicitly formulated 
in Proposition \ref{prop:density} below. 
\begin{proposition}\label{prop:density}
 The marginal distribution of  $(T=\tilde{T}_J,\tilde{S}_J)$ has the density given by \eqref{eq:density}.
\end{proposition}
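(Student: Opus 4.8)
The plan is to write down the full joint law of the output of the thinning procedure (Algorithm~\ref{alg:rej}), including every virtual jump, and then integrate out the virtual-jump information to recover \eqref{eq:density}. From the description of the algorithm, the sequence of all potential jump times $\tilde T=(\tilde t_1,\dots,\tilde t_n)$ in $[0,\tmax]$ together with the redundant skeleton $\tilde S=(\tilde s_0,\dots,\tilde s_n)$ has density (with respect to the natural reference measure summing over the random number $n$ of potential jumps)
\begin{equation*}
p(\tilde T,\tilde S)=\nu(\tilde s_0)\left[\prod_{k=1}^{n} R(\tilde s_{k-1})\,\e^{-R(\tilde s_{k-1})(\tilde t_k-\tilde t_{k-1})}\,P(\tilde s_{k-1},\tilde s_k)\right]\e^{-R(\tilde s_n)(\tmax-\tilde t_n)}\,,
\end{equation*}
with $P$ given by \eqref{eq:matP} and the convention $\tilde t_0=0$; the final exponential is the probability that the waiting time after $\tilde t_n$ exceeds $\tmax-\tilde t_n$. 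I would then substitute $R(s)P(s,s')=Q(s,s')$ at each true jump and $R(s)P(s,s)=R(s)-Q(s)$ at each virtual jump, so that every factor of $R$ produced by an exponential density cancels against the denominator in $P$.

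Next I would group the potential jumps into the maximal blocks on which $\tilde S$ is constant. Since the state changes exactly at the true-jump indices $J$, fixing $(T,\tilde S_J)=(T,S)$ splits the sequence into segments: the $j$-th segment sits on $[t_{j-1},t_j]$ with constant state $s_{j-1}$ and terminates in a true jump to $s_j$, and there is a final segment on $[t_m,\tmax]$ with state $s_m$ and no terminating jump. Because all waiting-time rates and transition factors within a segment depend only on that segment's (constant) state, the density factorizes over segments, and the virtual jumps of distinct segments can be marginalized independently.

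The key computation is the per-segment marginalization. On a segment of length $\tau=t_j-t_{j-1}$ with state $s=s_{j-1}$ carrying $r$ virtual jumps, after the cancellation above the integrand collapses to $(R(s)-Q(s))^r\,\e^{-R(s)\tau}\,Q(s_{j-1},s_j)$, which is independent of the actual virtual-jump positions. Integrating over the ordered simplex $\{t_{j-1}<v_1<\dots<v_r<t_j\}$ contributes the volume $\tau^r/r!$, and summing over $r\ge 0$ gives
\begin{equation*}
Q(s_{j-1},s_j)\,\e^{-R(s)\tau}\sum_{r=0}^{\infty}\frac{\big((R(s)-Q(s))\tau\big)^r}{r!}=Q(s_{j-1},s_j)\,\e^{-Q(s_{j-1})\tau}\,,
\end{equation*}
precisely the $j$-th factor of \eqref{eq:density}; the final segment yields $\e^{-Q(s_m)(\tmax-t_m)}$ by the same series. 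Multiplying the per-segment contributions together with $\nu(s_0)$ reproduces \eqref{eq:density}.

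I expect the main obstacle to be the bookkeeping rather than any deep analysis: one must set up the marginalization correctly when the total number $n$ of potential jumps and the number of virtual jumps in each segment are all random, and justify that summing over every configuration with a fixed true-jump skeleton factorizes as a product of independent per-segment sums (this uses that, conditionally, the virtual jumps in different segments are governed by disjoint groups of exponential waits with state-dependent rates). Once the algorithm's joint density is written in the cancelled form, the only analytic content is the identity $\e^{-R(s)\tau}\e^{(R(s)-Q(s))\tau}=\e^{-Q(s)\tau}$, which is exactly what makes the auxiliary rate $R$ disappear from the marginal law, leaving the original intensities $Q$.
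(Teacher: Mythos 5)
Your proof is correct, but it takes a genuinely different route from the paper's. The paper argues at the level of waiting times: it computes the c.d.f.\ of the time until the first \emph{accepted} jump by summing, over the number $k$ of proposals, the Gamma$(k,R(s))$ law of $\tilde t_k$ against the geometric weight $(1-Q(s)/R(s))^{k-1}Q(s)/R(s)$, obtains $1-\e^{-Q(s)t}$, and then separately checks that the accepted transition probability is $Q(s,s')/Q(s)$; the identification of the law of $(T,\tilde S_J)$ with \eqref{eq:density} then rests on the renewal/Markov structure at true jumps. You instead start from the full joint density \eqref{eq:density_with_virtual}, cancel $R$ against the denominators of $P$, and marginalize the virtual jumps segment by segment, the ordered-simplex volume $\tau^r/r!$ turning the sum over $r$ into $\e^{(R(s)-Q(s))\tau}$. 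The two arguments hinge on the same exponential series, but yours produces the whole density \eqref{eq:density} (waiting times and skeleton transitions together) in a single computation, makes the per-segment independence explicit, and exhibits the conditional law of the virtual jumps as a by-product, so Corollary~\ref{cor:density} follows immediately rather than requiring the separate Bayes computation the paper gives. The bookkeeping you flag as the main obstacle is exactly the content the paper sweeps into ``without loss of generality the previous jump occurred at time $0$,'' so neither proof is more rigorous on that point. One small discrepancy: your segment computation yields the factor $Q(s_{j-1},s_j)\e^{-Q(s_{j-1})(t_j-t_{j-1})}$, whereas \eqref{eq:density} as printed has $\e^{-Q(s_j)(t_j-t_{j-1})}$; your index is the correct one (the sojourn on $[t_{j-1},t_j)$ is spent in state $s_{j-1}$, consistent with the analogous factor $\exp\{-Q(s_{k-1})(\tilde t_k-\tilde t_{k-1})\}$ in \eqref{eq:density_const}), so this is a typo in the paper rather than an error in your argument.
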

This proposition is known and can be found e.g.\ in  \citet{rao2012mcmc}. However,  to make the paper self-contained we give a proof in the Appendix.
The following corollary  shows that, conditionally on  $(T,\tilde{S}_J)$, i.e.\ on true jumps and the skeleton, the set of virtual jumps $V$ 
is an inhomogeneous Poisson  process with intensity $R(X(t))-Q((X(t))$. 
\begin{corollary}\label{cor:density} Let $V_j$ denote moments of virtual jumps between two adjacent true jumps $t_{j-1}$ and $t_j$. 
 The conditional density of $V_j$ is given by
\[
p(V_j|X(t_{j-1})=s,t_{j-1},t_{j})=(R(s)-Q(s))^{|V_j|}\exp\left\{-(t_{j}-t_{j-1}) (R(s)-Q(s)) \right\}\;.
\]
 \end{corollary}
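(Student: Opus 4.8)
The plan is to obtain the conditional law of the virtual jumps as the ratio between the density of the fully augmented process $(\tilde T,\tilde S)$ produced by Algorithm~\ref{alg:rej} and the marginal density of the true jumps and skeleton, which is exactly \eqref{eq:density} by Proposition~\ref{prop:density}. Since $(T,\tilde S_J)$ is a function of $(\tilde T,\tilde S)$ and adding $V$ recovers $(\tilde T,\tilde S)$, we have $p(V\mid T,\tilde S_J)=p(\tilde T,\tilde S)/p(T,\tilde S_J)$, and it suffices to read off the interval-by-interval structure of this ratio.

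First I would write the joint density of $(\tilde T,\tilde S)$ directly from the generative description of the thinning scheme: each inter-event interval $\tilde t_k-\tilde t_{k-1}$ is exponential with parameter $R(\tilde s_{k-1})$, each transition $\tilde s_{k-1}\to\tilde s_k$ carries the factor $P(\tilde s_{k-1},\tilde s_k)$, and a final survival factor accounts for the last waiting time exceeding $\tmax$. This gives
\[
 p(\tilde T,\tilde S)=\nu(\tilde s_0)\prod_{k=1}^{n} R(\tilde s_{k-1})\,\e^{-R(\tilde s_{k-1})(\tilde t_k-\tilde t_{k-1})}\,P(\tilde s_{k-1},\tilde s_k)\;\e^{-R(\tilde s_n)(\tmax-\tilde t_n)}.
\]

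Next I would simplify using \eqref{eq:matP}: at a true jump the factor $R(s)P(s,s^\prime)$ collapses to $Q(s,s^\prime)$, while at a virtual jump the self-transition factor $R(s)P(s,s)$ collapses to $R(s)-Q(s)$. Because the state is constant, equal to $s=s_{j-1}=X(t_{j-1})$, throughout the interval $[t_{j-1},t_j]$ delimited by two adjacent true jumps, the exponential terms on that interval telescope and contribute $\exp\{-R(s)(t_j-t_{j-1})\}$, whereas the virtual-jump factors contribute $(R(s)-Q(s))^{|V_j|}$. Collecting terms over all inter-jump intervals, $p(\tilde T,\tilde S)$ factorises into $\nu(s_0)\prod_j Q(s_{j-1},s_j)$ times one block $(R(s)-Q(s))^{|V_j|}\exp\{-R(s)(t_j-t_{j-1})\}$ per interval. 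Dividing by \eqref{eq:density}, the factor $\nu(s_0)$ and the intensities $Q(s_{j-1},s_j)$ cancel, and on each interval the exponential $\exp\{-R(s)(t_j-t_{j-1})\}$ is divided by the holding-time factor $\exp\{-Q(s)(t_j-t_{j-1})\}$, leaving precisely $(R(s)-Q(s))^{|V_j|}\exp\{-(R(s)-Q(s))(t_j-t_{j-1})\}$. As this is a product over disjoint intervals, the families $V_j$ are conditionally independent given $(T,\tilde S_J)$ and each is a homogeneous Poisson process of rate $R(s)-Q(s)$ on $[t_{j-1},t_j]$; the terminal interval $[t_m,\tmax]$ is handled identically with $\tmax$ in place of $t_j$.

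I expect the only delicate point to be the bookkeeping of the exponentials: $R$ appears in the augmented density while $Q$ appears in the marginal \eqref{eq:density}, so their ratio must be arranged to produce the net virtual intensity $R(s)-Q(s)$, and one must exploit that the state is piecewise constant so that the step-by-step exponentials can be regrouped interval-by-interval. Once this regrouping is made, both the Poisson form of the conditional density and the independence across intervals follow at once.
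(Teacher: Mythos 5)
Your argument is correct and is essentially the paper's: both express the conditional law of the virtual jumps as the ratio of the thinning-scheme density (your $p(\tilde T,\tilde S)$, which is just \eqref{eq:density_with_virtual}) to the marginal \eqref{eq:density} supplied by Proposition~\ref{prop:density}, with the cancellation $R(s)P(s,s)=R(s)-Q(s)$ and $\exp\{-R(s)(t_j-t_{j-1})\}/\exp\{-Q(s)(t_j-t_{j-1})\}$ producing the net rate $R(s)-Q(s)$. The only difference is organizational: the paper carries out the same division locally on a single inter-jump interval (the $|V_j|+1$ exponential waiting times times the rejection probabilities $(R(s)-Q(s))/R(s)$ and one acceptance $Q(s)/R(s)$, divided by $Q(s)\exp\{-Q(s)(t_j-t_{j-1})\}$), whereas you factorize the global ratio interval by interval, which as a small bonus also exhibits the conditional independence of the $V_j$ across intervals.
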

The proof is also given in the Appendix. 
In the sequel we will work with the redundant representation of the process $\{X(t)\}$ introduced
in this section. For simplicity, let us slightly abuse notation and from now on write $S$ instead of
$\tilde{S}$. Clearly, the density of $(T,V,S)$ is given by 
\begin{align}
 \label{eq:density_with_virtual}
 p(T,V,S)&=\nu(s_0)\prod_{k=1}^n R(s_{k-1})P(s_{k-1},s_k)\exp\left\{-\int_0^{\tmax} R(X(u))\d u \right\}\nonumber\\
 &=\nu(s_0)\prod_{k=1}^n(R(s_{k-1})-Q(s_{k-1}))^{\mathbf{1}(s_k=s_{k-1})}Q(s_{k-1},s_k)^{\mathbf{1}(s_k\neq s_{k-1})}\\
 &\qquad\: \times \prod_{k=1}^n \exp\left\{-R(s_k)(\tilde{t}_k-\tilde{t}_{k-1})\right\}\exp\left\{-R(s_n)(\tmax-\tilde{t}_n)\right\}\;,
 \nonumber
\end{align} 
where $n=|T|+|V|$ is the total number of jumps. 

Now we describe two particular choices of intensity $R$. The first one leads to the so-called uniformization \citep{jensen1953markoff,cinlar2013introduction,hobolth2009,RaoTeh2013a}.
Let $\lambda\geq\max_sQ(s)$ and consider moments of potential jumps distributed according to homogeneous Poisson process with intensity $\lambda$. Precisely,
we consider the thinning procedure with $R(s)\equiv\lambda$. Clearly, the joint distribution of true jumps, virtual jumps and 
skeleton is now given by
\begin{equation}
 \label{eq:density_uniformization}
 p(T,V,S)\propto \lambda^n \nu(s_0)\prod_{k=1}^n P(s_{k-1},s_k)\;,
 \end{equation}
where $P$ is the transition matrix of discrete time Markov chain defined by
\begin{equation}
 \label{eq:matP_uniformization}
 P(s,s^\prime)=\begin{cases}
                \displaystyle\frac{Q(s,s^\prime)}{\lambda}&\text{  if } s\neq s^\prime\;;\\
                \\
                1- \displaystyle\frac{Q(s)}{\lambda}&\text{ if } s= s^\prime\;.
               \end{cases}
\end{equation}
In the case of uniformization, conditionally on the trajectory $X([0,\tmax])$, the virtual jumps  form a piecewise homogeneous Poisson process with 
the intensity constant and equal to $\lambda-Q(X(t_j))$ on every time interval $[t_j,t_{j+1})$ for $j=0,1,\dots,|T|$.

The second natural choice is to make virtual jumps distributed as homogeneous Poisson process with intensity $\theta>0$. 
Let $R(s)=Q(s)+\theta$. Then the thinning procedure leads to the following probability  distribution: 
\begin{align}
 \label{eq:density_const}
 p(T,V,S)&\propto \nu(s_0)\prod_{k=1}^n P(s_{k-1},s_k)(\theta+Q(s_{k-1}))\exp\left\{-(Q(s_{k-1})+\theta)(\tilde{t}_k-\tilde{t}_{k-1})\right\}\nonumber\\
         &\qquad\qquad\qquad\qquad \times \exp\left\{-(Q(s_n)+\theta)(\tmax-\tilde{t}_n)\right\}\\
         &=\nu(s_0)\prod_{k=1}^n Q(s_{k-1},s_k)^{\mathbf{1}(s_k\not=s_{k-1})}
            \exp\left\{-Q(s_{k-1})(\tilde{t}_k-\tilde{t}_{k-1})\right\}\nonumber\\
         &  \qquad\qquad\qquad\qquad \times \exp\left\{-Q(s_n)(\tmax-\tilde{t}_n)\right\} \theta^{|V|}\exp\{-\theta\tmax\}\nonumber\;,
 \end{align}   
where the transition matrix $P$ of discrete time Markov chain which generates    
the skeleton $S$ is given by
\begin{equation}
 \label{eq:matP_const}
 P(s,s^\prime)=\begin{cases}
                \displaystyle\frac{Q(s,s^\prime)}{Q(s)+\theta}&\text{  if } s\neq s^\prime\;;\\
                \\
                \displaystyle\frac{\theta}{Q(s)+\theta}&\text{ if } s= s^\prime\;.
               \end{cases}
\end{equation}

\section{Continuous time Bayesian networks}\label{sec:CTBN}

Let $(\mathcal{V},\mathcal{E})$ denote a directed graph with possible cycles.  We write $w\to u$ instead of $(w,u)\in\E$.
For every node $w\in\mathcal{V}$ consider a corresponding space  $\mathcal{X}_w$ of possible states. 
Assume that each space $\mathcal{X}_w$ discrete.
We consider a continuous time stochastic process on the product
space $\mathcal{X}=\prod_{w\in\mathcal{V}} \mathcal{X}_w$. Thus a state $s\in\mathcal{X}$ is a
configuration $s=(s_w)=(s_w)_{w\in\mathcal{V}}$, where $s_w\in\mathcal{X}_w$. 
If $\mathcal{W}\subseteq\mathcal{V}$ then we write $s_\mathcal{W}=(s_w)_{w\in\mathcal{W}}$ for 
configuration $s$ restricted to nodes in $\mathcal{W}$. We also use notation
$\mathcal{X}_\mathcal{W}=\prod_{w\in\mathcal{W}} \mathcal{X}_w$, so that we can write $s_\mathcal{W}\in\mathcal{X}_\mathcal{W}$. 
The set $\mathcal{V}\setminus\{w\}$ will be denoted simply
by $-w$.  We define the set of parents of node $w$ by 
\[\pa(w)=\{u\in\mathcal{V}\;:\;u\to w\}\;,\]
and we define the set of children of node $w$ by   
\[\ch(w)=\{u\in\mathcal{V}\;:\;w\to u\}\;.\]
Suppose we have a family of functions
$Q_w:\X_{\pa(w)}\times(\X_w\times \X_w)\to[0,\infty)$.
For fixed $c\in \X_{\pa(w)}$, we consider $Q_w(c;\cdot,\cdot\;)$ as a  conditional intensity matrix 
(CIM) at node $w$ (only off-diagonal elements of this matrix
have to be specified, the diagonal ones are irrelevant).
The state of a CTBN at time $t$ is a random element $X(t)$ of the space 
$\X$ of configurations. Let $X_w(t)$ denote its $w$th coordinate. 
The process $\left\{X_w(t)_{w\in\mathcal{V}},t\geq 0\right\}$ 
is assumed to be Markov and its evolution can be described 
informally as follows. Transitions at node $w$ depend on the current 
configuration of the parent nodes. If the state
of some parent changes, then node $w$ switches to other transition 
probabilities.  If  $s_w\not=s_w\p$ then  
\begin{equation*}
         \Pr\left(X_w(t+\d t)=s_w\p|X_{-w}(t)=s_{-w},X_w(t)=s_w\right)=
              Q_w(s_{\pa(w)},s_w,s_w\p)\,\d t. 
\end{equation*}
Formally, CTBN is a MPJ with transition intensities given by  
\begin{equation*}
    Q(s,s\p)=
          \begin{cases}
             Q_w(s_{\pa(w)},s_w,s_w\p) & \text{if $s_{-w}=s_{-w}\p$ and $s_{w}\not=s_{w}\p$ for some $w$;} \\        
              0       &  \text{if $s_{-w}\not=s_{-w}\p$ for all $w$,}
          \end{cases}
\end{equation*}
for $s\not=s\p$ (of course, $Q(s,s)$ must be defined ``by subtraction'' to ensure $\sum_{s\p} Q(s,s\p)=0$).  

For a CTBN, the density  of sample path  $X=X([0,\tmax])$ in a bounded time interval $[0,\tmax]$ decomposes as follows:
\begin{equation}
 \label{eq:densCTBN}
 p(X)=\nu(X(0))\prod_{w\in\mathcal{V}}p(X_w\Vert X_{\pa(w)})\;,
\end{equation}
where $\nu$ is the initial distribution on $\X$ and $p(X_w\Vert X_{\pa(w)})$ is the density of piecewise 
homogeneous Markov jump process with intensity matrix equal to $Q_w(c;\cdot,\cdot\;)$ in every time sub-interval 
such that $X_{\pa(w)}=c$.  Formulas for the density of CTBN appear e.g.\ in \citep[Sec.\ 3.1]{Nod2},
\citep[Eq.\ 2]{FaXuSh}, \citep[Eq.\ 1]{FaSh} and \citep{CTBNMet2014}.
These formulas give a factorization of the main part of the density as in \eqref{eq:densCTBN}, but in the first three
of the cited papers the initial distribution $\nu$ is disregarded. There are some subtle problems related to $\nu$, discussed in \cite{CTBNMet2014}. 
Our notation $p(X_w\Vert X_{\pa(w)})$ is consistent with the notion of ``conditioning by intervention'', 
see \e.g.\ \citep{Lauritzen01causalinference}. Indeed, $p(X_w\Vert X_{\pa(w)})$ is the density of the process $X_w$ at note $w$ 
under the assumption that the sample paths at the parent nodes $X_{\pa(w)}$ are fixed and $X_w(0)$ is given, see e.g.\ 
\cite{CTBNMet2014}, for details. 
Below we explicitly write an expression for $p(X_w\Vert X_{\pa(w)})$ in terms of moments of jumps and the 
skeleton of the process $(X_w,X_{\pa(w)})$, as in \eqref{eq:density}.
Let $T^w=(t_0^w\ldots,t_i^w,\ldots)$ and $T^{\pa(w)}=(t_0^{\pa(w)},\ldots,t_j^{\pa(w)},\ldots)$ denote  moments of jumps at node $w\in\V$ and
at parent nodes, respectively. By convention put $t_0^w=t_0^{\pa(w)}=0$ and $t^w_{|T^w|+1}=t^{\pa(w)}_{|T^{\pa(w)}|+1}=\tmax$. Analogously, 
$S^w$ and $S^{\pa(w)}$ denote the corresponding skeletons. Thus we 
divide the time interval $[0,\tmax]$ into segments $[t^{\pa(w)}_j,t^{\pa(w)}_{j+1})$, $j=0,1,\dots |T^{\pa(w)}|$ such that $X_{\pa(w)}$ is constant 
and $X_w$ is homogeneous in each segment. Next we define sets 
$I_j=\{i>0 :\ t^{\pa(w)}_j<t^w_i<t^{\pa(w)}_{j+1}\}$ with notation $j_{\rm{ beg}},j_{\rm {end}}$ for the first and the last element of $I_j$.  
Analogously to \eqref{eq:density}, we obtain the following formula: 
\begin{align}
  \label{eq:conddens}
& p(X_w\Vert X_{\pa(w)})=p(T^w,S^w\Vert S^{\pa(w)},T^{\pa(w)})= \prod_{j=0}^{|T^{\pa(w)}|}\Bigg[ \prod_{i \in I_j}Q_w(s_j^{\pa(w)};s_{i-1}^w,s_i^w)\nonumber\\
 &\times \prod_{i \in I_j\setminus\{j_{\rm {beg}}\}}\exp\left\{-(t_i^w-t_{i-1}^w)Q_w(s_j^{\pa(w)};s_{i-1}^w)\right\}\\
  &\times \exp\left\{-(t_{j_{\rm {beg}}}^w-t_j^{\pa(w)})Q_w(s_j^{\pa(w)};s_{j_{\rm {beg}}-1}^w)-(t_{j+1}^{\pa(w)}-t_{j_{\rm{end}}}^w)Q_w(s_{j}^{\pa(w)};s_{j_{\rm{end}}}^w)\right\} 
  \Bigg]\;.\nonumber
\end{align} 
Formula \eqref{eq:conddens} is equivalent to \citep[Eq.\ 2]{Nod2} and \citep[Eq.\ 1]{FaSh}, but expressed in terms of $(S,T)$.
\bigskip\goodbreak

\section{Hidden Markov models}

Let $X=\{X(t),{0\leq t\leq \tmax}\}$ be a Markov jump process.
Suppose that process $X$ cannot be directly observed but we can observe some random
quantity $Y$ with probability distribution $L(Y|X([0,\tmax])$. Let us say $Y$ is the evidence
and $L$ is the likelihood.  We assume  that the likelihood depends
on $X$ only through the actual sample path  $X([0,\tmax])$ (does not depend on virtual jumps). 
The problem is to restore the hidden trajectory of $X$ given $Y$.  From the
Bayesian perspective, the goal is to compute/approximate the posterior 
\begin{equation}\nonumber
 p(X([0,\tmax])|Y)\propto p(X([0,\tmax]))L(Y|X([0,\tmax])).
\end{equation}
Function $L$, transition probabilities $Q$ and initial distribution $\nu$ are assumed 
to be known. To get explicit form of posterior distribution we will consider two typical forms of noisy observation. 
In the first model, the trajectory of $X$ is observed independently at deterministic time points $t^*_1,\dots,t^*_l$ with corresponding likelihood functions
$L_1(y_1|X(t^*_1)),\dots,L_l(y_l|X(t^*_l))$. Since $X(t)$ is constant between jumps, for all $i=1,\dots,l$
we have $L_i(y_i|X(t^*_i))=L_i(y_i|s_{i^*})$, where $i^*=\sup\{j: t_j\leq t^*_i\}$. 
If the trajectory of $X$ is represented by moments of true jumps $T$, virtual jumps  $V$ and skeleton $S$
then the posterior distribution is given by
\begin{equation}
\label{eq:post_unif}
 p(T,V,S|Y)\propto p(T,V,S)\prod_{i=1}^l L_i(y_i|s_{i^*})
\end{equation} 
where $p(T,V,S)$ is given by \eqref{eq:density_with_virtual}. In Section \ref{sec:Virtual} we considered two scenarios of
adding virtual jumps: via uniformization and via a homogeneous Poisson process. The corresponding densities are given by \eqref{eq:density_uniformization}
and \eqref{eq:density_const}, respectively. Observe that for both variants of adding virtual jumps, the posterior can be expressed in the following form:
\begin{equation}
 \label{eq:skelton_as_hmm}
 p(S|T,V,Y)=\nu(s_0)g_0(s_0)\prod_{k=1}^n P(s_{k-1},s_k) g_k(s_k),
\end{equation}
where $P$ is a Markov transition matrix,  $g_k$ are some functions which can depend on $T,V,Y$ and $n=|T|+|V|$.  
Hence the skeleton, conditionally on all the jumps and the evidence, can be treated as a hidden Markov model with discrete time.

In the second typical model of observation, the evidence $Y$ is a fully observed continuous time stochastic process depending on $X$.
Expressly, we assume that $Y$, given the trajectory of $X$, is a piecewise homogeneous Markov jump process 
such that the pair $(X,Y)$ is a CTBN with the graph structure $X\to Y$. Thus the likelihood can be expressed by \eqref{eq:conddens}, with $X^w=Y$ and 
$X^{\pa(w)}=X$. In this case we can easily obtain the same conclusion as before: for both variants of adding virtual jumps, the posterior can be expressed in the form
\eqref{eq:skelton_as_hmm}. The skeleton is conditionally a hidden Markov model.

\section{MCMC algorithm}

Let us recall the standing assumption that evidence $Y$ depends only on the trajectory of $X$ but not on virtual jumps. This assumption covers most of
usual scenarios and clearly implies that $p(V|T,S,Y)=p(V|T,S)$. Now we are able to state the main algorithm. Similarly to \citep{rao2012mcmc,RaoTeh2013a},
a single step of the iterative procedure is the following. We take  the trajectory
obtained in the previous step, represented by $(T,S)$ (ignoring the virtual jumps). First we sample a new set of virtual jumps $V$. We can use two 
variants of sampling. In the case of uniformization, $V$ is a piecewise homogeneous Poisson process with intensities
$\lambda- Q(X(t_k))$. Alternatively, $V$ is a homogeneous Poisson process with intensity $\theta$. Next we generate a new skeleton $S^\prime$ 
using Markov kernel with $p(S|T,V,Y)$ as invariant distribution.
In \citep{rao2012mcmc,RaoTeh2013a}, the authors use independent sampling by the forward filtering -- backward sampling  (FFBS) algorithm. 
In our approach we propose to use the particle Gibbs algorithm invented by \citet{andrieu2010particle}, which is described in the next section.
Note that a new skeleton with fixed times of potential jumps leads to a new allocation of true and virtual jumps. So we obtain a new trajectory described by
$(T^\prime,V^\prime,S^\prime)$ such that $T\cup V=T^\prime\cup V^\prime$. Finally we remove virtual jumps to obtain a new state $(T^\prime,S^\prime)$. 
The algorithm is summarized below.
\begin{algorithm}
 \caption{Single step of MCMC algorithm.}
 \begin{algorithmic}
  \STATE Input: Previous state $(T,S)$ and observation $Y$.
  \STATE \textbf{1.} Add virtual jumps $V$.
  \STATE \textbf{2.}  Draw new skeleton $S^\prime$ from Markov kernel $K(S,\cdot\;)$ targeting $p(S|T,V,Y)$. Skeleton $S^\prime$ defines 
  new allocation of virtual and true jumps $T^\prime,V^\prime$ such $T\cup V=T^\prime\cup V^\prime$.
 \STATE  \textbf{3.}  Remove virtual jumps $V^\prime$.
 \RETURN new state  $(T^\prime,S^\prime)$.
  \end{algorithmic}
\end{algorithm}

The next proposition shows that this algorithm is ergodic.
\begin{proposition}\label{prop:ergodic} Assume that $\lambda>\max_s Q(s)$ in the case of uniformization or $\theta>0$ in the case of homogeneous virtual 
jumps. Assume that kernel $K(S,S^\prime)$ leaves distribution $p(S|T,V,Y)$ invariant and $K(S,S^\prime)>0$ for all $S^\prime$  such that 
$p(S^\prime|T,V,Y)>0$. Then MCMC algorithm described above is $\phi$-irreducible, aperiodic with stationary distribution
 $\pi(T,S)=p(T,S|Y)$. Thus for $\pi$-almost all initial positions, the algorithm is ergodic, i.e.\
 \[\Vert M((T,S),\cdot\;)^m-\pi(\cdot)\Vert_{\rm {tv}}\overset{m\to\infty}{\longrightarrow} 0\;,\]
 where $M$ denotes the kernel of our MCMC algorithm.
\end{proposition}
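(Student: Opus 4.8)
The plan is to verify the three classical ingredients — invariance of $\pi$, $\phi$-irreducibility, and aperiodicity — and then to invoke the standard ergodic theorem for Markov chains, which asserts that these three properties force total-variation convergence from $\pi$-almost every starting point.

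First I would check invariance by tracking the law of the state through the three substeps of the algorithm. Assume $(T,S)\sim\pi=p(\cdot\,|Y)$. Step~1 draws $V$ from $p(V|T,S)$, which by Corollary~\ref{cor:density} is the correct conditional and which equals $p(V|T,S,Y)$ because the evidence is assumed independent of the virtual jumps; hence after step~1 the augmented object — the potential times $\tilde T=T\cup V$ together with the extended skeleton $S$ — is distributed as the augmented posterior $p(T,V,S|Y)$. Step~2 applies $K$, which fixes $\tilde T$ and leaves $p(S|T,V,Y)$ invariant; since only the conditional law of the skeleton given $(\tilde T,Y)$ is touched, the joint law $p(T,V,S'|Y)$ is preserved (the reallocation into $T'$ and $V'$ is merely a relabelling, since the density \eqref{eq:density_with_virtual} is a function of $(\tilde T,S)$ alone). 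Step~3 deletes the virtual jumps, i.e.\ integrates them out, and by Proposition~\ref{prop:density} returns a state distributed as $p(T',S'|Y)=\pi(T',S')$. Thus the composition preserves $\pi$.

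Second, for irreducibility I would take $\phi=\pi$ and argue that the one-step kernel $M((T,S),\cdot\,)$ has a component that is absolutely continuous with respect to the natural reference measure (Lebesgue measure on the ordered jump times, tensored with counting measure on the number of jumps and the skeleton values) whose density is strictly positive on the whole support of $\pi$. The two hypotheses are precisely what make this work. The assumption $\lambda>\max_sQ(s)$ (resp.\ $\theta>0$) forces every self-transition to be strictly positive, $P(s,s)=1-Q(s)/\lambda>0$ (resp.\ $\theta/(Q(s)+\theta)>0$), so the skeleton update can turn any current true jump into a virtual one, which step~3 then erases; thus the old jump locations create no obstruction. Simultaneously the virtual-jump intensity $\lambda-Q(X(t))$ (resp.\ $\theta$) is strictly positive, so the Poisson mechanism of step~1 deposits potential jumps in every region with positive density, and the hypothesis $K(S,S')>0$ whenever $p(S'|T,V,Y)>0$ lets the update realise any admissible skeleton. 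Combining these, given any target $(T^*,S^*)$ in $\mathrm{supp}\,\pi$ I would add virtual jumps landing in small neighbourhoods of the coordinates of $T^*$, use $K$ to impose self-transitions at the old locations and the genuine transitions dictated by $S^*$ at the new ones, and delete the remainder; this yields a strictly positive transition density onto a neighbourhood of $(T^*,S^*)$, establishing $\pi$-irreducibility. Aperiodicity is then immediate: since $M((T,S),\cdot\,)$ places positive density on a common set of positive $\pi$-measure irrespective of the starting state, the chain cannot be trapped in a cyclic partition.

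Having secured invariance, $\pi$-irreducibility and aperiodicity, I would invoke the standard result that a $\phi$-irreducible, aperiodic Markov chain possessing an invariant probability measure $\pi$ satisfies $\Vert M((T,S),\cdot\,)^m-\pi(\cdot)\Vert_{\rm{tv}}\to0$ for $\pi$-almost every starting point, which is exactly the claim. The hard part will be the irreducibility step, and within it the bookkeeping that lets a \emph{single} application of $M$ carry the true jumps off one continuous set of locations onto an essentially arbitrary disjoint set while simultaneously changing their number; the device that unlocks this is the strict positivity of the self-transition probabilities granted by the hypothesis on $\lambda$ (or $\theta$), which cleanly decouples the deletion of old jumps from the creation of new ones.
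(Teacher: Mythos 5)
Your proposal is correct and follows essentially the same route as the paper's proof: stationarity by tracking the law through the three substeps, $\phi$-irreducibility from the fact that the strictly positive virtual-jump intensity can place potential jumps anywhere while the strict positivity of $P(s,s)$ (guaranteed by $\lambda>\max_sQ(s)$ or $\theta>0$) together with $K(S,S')>0$ lets the skeleton update erase old true jumps and realise any admissible target, and finally the standard ergodic theorem for $\phi$-irreducible aperiodic chains. The only cosmetic difference is aperiodicity: the paper deduces it from $M((T,S),(T,S))>0$ (the skeleton stays put with positive probability since $K(S,S)>0$), whereas you use the overlap of one-step distributions; both are valid, and your write-up simply supplies details the paper leaves implicit.
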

\begin{proof}
 By construction it is clear that $p(T,S|Y)$ is a stationary distribution. Since $K(S,S)>0$, with positive probability it happens that the 
 skeleton does not change and hence virtual and true jumps remain unchanged. Clearly $M((T,S),(T,S))>0$ and so Markov chain $M$ is aperiodic.
 The assumption $\lambda>\max_s Q(s)$ or $\theta>0$ ensures that  the step of adding virtual jumps 
 can reach any configuration of virtual jumps. Together with the assumption $K(S,S^\prime)>0$ it leads to the conclusion that all states $(T,S)$ 
 in the support of $\pi$ are reachable. Hence  $M$ is $\phi$-irreducible. It is now enough to invoke 
 the well-known fact that $\phi$-irreducibility and aperiodicity imply ergodicity in total variation norm, 
 see for example \citep[Theorem 4,][]{roberts2004general}. 
\end{proof}
\begin{remark}
 Condition $K(S,S^\prime)>0$  is clearly satisfied by FFBS algorithm, because it is equivalent to independent sampling from  $p(S|T,V,Y)$. 
 This condition is also satisfied by the particle Gibbs algorithm described in the next section.
\end{remark}

In the case of CTBN we can use its dependence structure by introducing a Gibbs sampler over nodes of the graph $(\mathcal{V},\E)$. 
The same idea was exploited in \citep{EFK,RaoTeh2013a}.
By \eqref{eq:densCTBN}, the full conditional distribution of node $w$ given rest of the graph 
has the density 
\begin{equation}
 \label{eq:fullconditionals}
p(X_w|X_{-w},Y)\propto \nu(X_w(0)|X_{-w}(0)) p(X_w\Vert X_{\pa(w)})\prod_{u\in \ch(w)}p(X_u\Vert X_{\pa(u)})L(Y|X)\;.
\end{equation}
The density $p(X_w\Vert X_{\pa(w)})$ corresponds to a piecewise homogeneous Markov process. The expression $\prod_{u\in \ch(w)}p(X_u\Vert X_{\pa(u)})$
can be treated as a part of likelihood, similarly as $L(Y|X)$. Note that the conditional initial distribution $\nu(X_w(0)|X_{-w}(0))$ may be replaced
in formula \eqref{eq:fullconditionals} by the joint initial distribution $\nu(X(0))$, because these two quanities are proportional as functions of $X_w(0)$.
The step which leaves $p(X_w|X_{-w},Y)$ as invariant measure can be realized by the general algorithm described above.
The Gibbs sampler for CTBN is summarized below.
\begin{algorithm}[H]
 \caption{Gibbs sampler for CTBN.}
 \begin{algorithmic}
  \FOR {$w\in\mathcal{V}$ (in a deterministic or random order)}  
  \STATE Simulate $(T_w,S_w)$ using single step of MCMC algorithm targeting $p(X_w|X_{-w},Y)$, with $X_{-w}$ fixed.
  \ENDFOR
  \end{algorithmic}
\end{algorithm}
Note that if observations of different nodes are independent i.e. $L(Y|X)=\prod_{w\in\mathcal{V}}L_w(Y|X_w)$ then the full 
conditional distributions defined by
\eqref{eq:fullconditionals} reduce to 
\begin{equation*}
p(X_w|X_{-w},Y)\propto \nu(X(0)) p(X_w\Vert X_{\pa(w)})L_w(Y|X_w)\prod_{u\in \ch(w)}p(X_u\Vert X_{\pa(u)})\;.
\end{equation*}
Hence within the Gibbs sampler, the step for node $w$ need not involve evaluation of full likelihood.
An immediate corollary from Proposition~\ref{prop:ergodic} is that the Gibbs sampler for CTBN is also ergodic.
Note that in the step of adding virtual jumps we can choose the intensity parameters $\lambda$ or $\theta$ globally but, more generally, we can 
define different intensities for every node $w$, say $\lambda_w$ or $\theta_w$.
\begin{corollary}
 Assume that for every node $w\in\mathcal{V}$ we have $\lambda_w > \max_{s^w,s^{\pa(w)}}Q_v(s^{\pa(w)};s^w)$ in the case of uniformization or $\theta_w>0$ 
 for homogeneous virtual jumps.  Then the Gibbs sampler for CTBN (with either the particle Gibbs or FFBS used in sampling of a new skeleton) is ergodic.
\end{corollary}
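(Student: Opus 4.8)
The plan is to deduce the corollary from Proposition~\ref{prop:ergodic} by interpreting each node update as one instance of the single-step MCMC algorithm and then showing that the systematic scan over $\mathcal{V}$ inherits invariance, aperiodicity and $\phi$-irreducibility.

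First I would check that Proposition~\ref{prop:ergodic} applies verbatim to the update of a single node $w$. The relevant target is the full conditional $p(X_w\mid X_{-w},Y)$ of \eqref{eq:fullconditionals}, which has exactly the ``prior $\times$ likelihood'' shape needed for the hidden-Markov reformulation \eqref{eq:skelton_as_hmm}: the factor $p(X_w\Vert X_{\pa(w)})$ from \eqref{eq:conddens} plays the role of the (now piecewise homogeneous) MJP density, while $\prod_{u\in\ch(w)}p(X_u\Vert X_{\pa(u)})\,L(Y\mid X)$, evaluated with $X_{-w}$ frozen, plays the role of the likelihood. The only formal difference from Sections~\ref{sec:MPJ}--\ref{sec:Virtual} is that the effective intensity is the CIM $Q_w(c;\cdot,\cdot)$, which changes with the parent configuration $c$; but the thinning construction applies segment by segment, and the hypothesis $\lambda_w>\max_{s^w,s^{\pa(w)}}Q_w(s^{\pa(w)};s^w)$ (resp.\ $\theta_w>0$) is precisely the per-node form of the bound $\lambda>\max_s Q(s)$ (resp.\ $\theta>0$) required by Proposition~\ref{prop:ergodic}. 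Hence each node kernel $M_w$ leaves $p(X_w\mid X_{-w},Y)$ invariant, is aperiodic, and---by the Remark following Proposition~\ref{prop:ergodic}, applied to either FFBS or particle Gibbs---satisfies $M_w(X_w,X_w')>0$ whenever $p(X_w'\mid X_{-w},Y)>0$, as well as the holding property $M_w(X_w,X_w)>0$.

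Next I would form the sweep kernel $M_{\mathrm{Gibbs}}=M_{w_1}\cdots M_{w_d}$ with $d=|\mathcal{V}|$. Invariance of the joint posterior $\pi(X)=p(X\mid Y)$ is the standard Gibbs argument: resampling coordinate $w$ from its full conditional leaves $\pi$ invariant, and a composition of $\pi$-invariant kernels is again $\pi$-invariant. Aperiodicity is equally immediate, since $M_w(X_w,X_w)>0$ for every $w$ forces $M_{\mathrm{Gibbs}}(X,X)>0$.

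The main obstacle is $\phi$-irreducibility, because componentwise irreducibility of the $M_w$ does not by itself upgrade to joint irreducibility. My plan is to exploit the full-support property above and to reach a fixed target $X'\in\operatorname{supp}(\pi)$ coordinate by coordinate within a single sweep: set $w_1$ to $X'_{w_1}$, then $w_2$ to $X'_{w_2}$, and so on, each step having positive probability by $M_{w_i}(\,\cdot\,,X'_{w_i})>0$. The delicate point---and the one I expect to require a separate argument---is that this positivity holds only if the intermediate, partially updated configuration still lies in $\operatorname{supp}(\pi)$, so that the corresponding full conditional actually charges the target value. Under the natural positivity assumptions on $\nu$, on the active rates $Q_w$, and on $L$, the factorization \eqref{eq:densCTBN} shows that $\operatorname{supp}(\pi)$ is closed under replacing one node's trajectory by any conditionally admissible one, so the mixed configurations met along the sweep remain admissible and one sweep carries any $X$ to any $X'$ with positive probability. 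Granting this closure, $\phi$-irreducibility follows, and together with aperiodicity it yields ergodicity in total variation by the same result \citep[Theorem~4,][]{roberts2004general} used in Proposition~\ref{prop:ergodic}.
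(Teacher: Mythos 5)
Your proposal is correct in substance and follows the only route available: reduce to Proposition~\ref{prop:ergodic} node by node. The paper itself offers no proof at all --- the corollary is declared ``immediate'' from Proposition~\ref{prop:ergodic} --- so the comparison is really between your fleshed-out argument and a one-line assertion. Your first two steps (each $M_w$ is a valid instance of the single-step algorithm with the piecewise homogeneous CIM playing the role of $Q$ and the children's densities absorbed into the likelihood; invariance and aperiodicity of the sweep follow by composition and by $M_w(X_w,X_w)>0$) are exactly the content the authors take for granted, and you verify correctly that the per-node rate bound $\lambda_w>\max_{s^w,s^{\pa(w)}}Q_w(s^{\pa(w)};s^w)$ is what makes the segment-by-segment thinning legitimate. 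Where you genuinely go beyond the paper is in isolating the $\phi$-irreducibility of the composed kernel as the nontrivial step: you are right that per-node irreducibility does not compose for free, and your coordinate-by-coordinate path to a target $X'$ needs the intermediate mixed configurations to stay in $\operatorname{supp}(\pi)$. That closure property is \emph{not} implied by the hypotheses of the corollary as stated --- in a CTBN some conditional rates $Q_w(c;s,s')$ may be zero (the paper's own Lotka--Volterra example has many vanishing rates), so the support of the posterior need not be a product set and a partially updated configuration could in principle have zero density. Your proof therefore carries an extra positivity/closure hypothesis that the paper silently assumes; this is a gap in the paper rather than in your argument, but you should state the needed condition explicitly (e.g.\ that $\pi$-a.e.\ $X$ and $X'$ are connected by single-node replacements within $\operatorname{supp}(\pi)$) rather than appeal to ``natural positivity assumptions,'' since without it the corollary as literally stated is not fully justified.
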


\section{Particle Gibbs}\label{sec:pg}

In this section we will use notation which is standard in the literature on sequential Monte Carlo (SMC). Let $s_{0:k}=(s_0,s_1,\ldots,s_k)$. 
Consider a sequence of unnormalized densities $\gamma_k(s_{0:k})$ on increasing product spaces $\mathcal{S}^{k+1}$
for $k=1,\dots n$. The corresponding normalized probability densities are
\[\pi_k(s_{0:k})=\frac{\gamma_k(s_{0:k})}{Z_k}\;,\]
where $Z_k=\int \gamma_k(s_{0:k})\d s_{0:k}$ is the normalizing constant. A special case is the so-called state-space model where
$\pi_k(s_{0:k})=p(s_{0:k}|y_{0:k})$ and $\gamma_k(s_{0:k})=p(s_{0:k},y_{0:k})$. In the sequel we consider the state-space model because 
the distribution of skeleton $S$ given times of jumps (true and virtual; $(T,V)$) fits in this scheme c.f.\ \eqref{eq:skelton_as_hmm}.

Particle MCMC methods introduced by \citet{andrieu2010particle} provide a general framework for constructing an MCMC kernel targeting $\pi(s_{0:k})$ 
with transition rule based on SMC algorithms. Before we describe particle MCMC in detail, we first have to recall standard SMC methods, e.g.\
\citet{Freitas,doucet2009tutorial,del2006sequential,pitt1999filtering}. SMC
sequentially approximates each of probability distributions $\pi_k$ by an empirical distribution
\begin{equation}\label{eq:approxdens}
\hat\pi^N_k(\d s_{0:k})=\sum_{i=1}^N\frac{w_k^i}{\sum_j w_k^j}\delta_{\xi^i_{0:k}}(\d s_{0:k})\;,
 \end{equation}
where $\{\xi_{0:k}^i,w_k^i\}_{i=1}^N$ is a weighted particle system. The system is propagated as follows. 
Given previous approximation 
$\{\xi_{0:k-1}^i,w_{k-1}^i\}_{i=1}^N$ at time $k-1$, first we draw $\{\tilde\xi_{0:k-1}^i\}_{i=1}^N$ from the multinomial distribution with probabilities 
proportional to weights {$\{w_{k-1}^i\}_{i=1}^N$} (this resampling step is introduced to avoid degeneracy of weights).
Next we generate new particles $\{\xi_k^i\}_{i=1}^N$ according to $r_k(\cdot|\tilde\xi_{0:k-1}^i)$ and set $\xi_{0:k}^i=(\tilde \xi_{0:k-1}^i,\xi_k^i)$. 
Here $r_k$ is an instrumental kernel from  $\mathcal{S}^{k}$ to  $\mathcal{S}$ (identified with a conditional density).
Finally we compute new weights 
\begin{equation}
 \label{eq:weights}
 w_k^i=\frac{\gamma_k(\xi^i_{0:k})}{\gamma_{k-1}(\tilde \xi^i_{0:k-1})r_k(\xi_k^i|\tilde\xi_{0:k-1}^i)}\;.
\end{equation}

The SMC algorithm is summarized below.
\begin{algorithm}[H]
 \caption{SMC algorithm.}
 \begin{algorithmic}
 \STATE Initialization
\FOR {$i=1,\dots,N$}
\STATE Draw $\xi_0^i\sim r_0(\cdot)$.
\STATE Compute weights
\[w^i_0=\frac{\gamma_0(\xi_0^i)}{r_0(\xi_0^i)}\;.\]
\ENDFOR
\STATE Main loop
\FOR {{$k=1,\dots,n$}}
 \STATE \textbf{Resampling step:}\\
 Draw ancestors \[\{a^i_k\}_{i=1}^N\sim Multinomial\left(N,\frac{w_{k-1}^1}{\sum_j w_{k-1}^j},\dots,\frac{w_{k-1}^N}{\sum_j w_{k-1}^j}\right)\]
 and set
 \[\tilde\xi_{0:k-1}^i=\xi_{0:k-1}^{a^i_t}\;.\]
 \STATE \textbf{Propagation step:}
 \FOR {$i=1,\dots,N$}
 \STATE Draw \[\xi_k^i\sim r_k(\cdot|\tilde\xi_{0:k-1}^i)\] 
 and set
 \[\xi_{0:k}^i=(\tilde\xi_{0:k-1}^i,\xi_k^i)
 \;.\]
\STATE Compute weights $w_k^i$ from \eqref{eq:weights}.
 \ENDFOR
 \ENDFOR
 \end{algorithmic}
 \end{algorithm}

The particle Gibbs (PGS) algorithm introduced by \citet{andrieu2010particle} is based on SMC algorithm. Given a previous path $s_{0:n}$ we run an SMC algorithm with
one path fixed, say $\xi_{0:n}^N=s_{0:n}$, and obtain system of particles $\{\xi_{0:n}^i,w_n^i\}_{i=1}^N$. Next we draw a new path $s^\prime_{0:n}$ with
probability 
\[\mathbb{P}(s^\prime_{0:n}=\xi_{0:n}^i)\propto w_n^i\;.\]
This procedure defines the following Markov kernel:
\[
K(s_{0:n },\cdot)=\mathbb{E}(\hat \pi^N_n(\cdot)|s_{k:n}=\xi_{0:n}^N)\;,
\]
where $\hat\pi^N_n$ is defined by  \eqref{eq:approxdens}. It is shown that for every number of particles $N$ larger than one, $\pi_n$ is a stationary distribution for kernel $K$ \citep{andrieu2010particle}.

There is a well-known problem of path degeneracy of SMC samplers \citep{doucet2009tutorial}. 
For large $n$,  the beginning of the path $s_{0:n}^\prime$ can be based on only few trajectories. This may lead 
to  poor mixing of the particle Gibbs sampler, because kernel $K$ with high probability leaves the beginning of 
trajectory unchanged,
 see \citep{chopin2013particle,lindsten2013backward,lindsten2014particle}. A remedy for this problem can be an additional step of ancestor 
resampling proposed by \citet{lindsten2014particle}. Let $\xi^N_{0:n}=s_{0:n}$ be the fixed trajectory in the particle Gibbs algorithm. For every $k=1,\dots,n$ we sample
an ancestor of $\xi^N_k$ from the set  of trajectories $\{\xi_{0:k-1}^i\}_{i=1}^N$ with probabilities proportional to weights {
\begin{equation}
 \label{eq:ancestor_weights}
 w_{k-1|n}^i=w_{k-1}^i\frac{\gamma_n((\xi^i_{0:k-1},s_{k:n}))}{\gamma_k(\xi^i_{0:k-1})}\;.
\end{equation} }
The above modification does not change the invariant measure of the Markov kernel \citep[Theorem 1,][]{lindsten2014particle}. 

Now we are ready to describe precisely the step of sampling a new skeleton in the MCMC algorithm for hidden Markov jump processes.
Let us recall \eqref{eq:skelton_as_hmm}. The conditional distribution of skeleton given moments of jumps and evidence is of the form {
\begin{equation*}
 p(S|T,V,Y)=\nu(s_0)g_0(s_0)\prod_{k=1}^n P(s_{k-1},s_k) g_k(s_k)\;,
\end{equation*}
where $P$ is a transition matrix of some Markov chain.
A standard choice is to use priors as instrumental kernels $r$ i.e.\ $r_0=\nu$ and $r_k=P$ for $k>0$. This choice leads to a simplified form of weights:
\begin{equation*}
 w_0^i=g_0(\xi_0^i)\;,\quad w_k^i=g_k(\xi_k^i)\;,\quad   w_{k-1|n}^i=w_{k-1}^i P(\xi_{k-1}^i,s_k)\;,
\end{equation*}
for $i=1\dots,N$ and $k=1,\dots,n$. If the algorithm is applied in a CTBN setting within a Gibbs sampler step, then the initial
conditional distribution $\nu(X_w(0)|X_{-w}(0))$ in \eqref{eq:fullconditionals} might be difficult to sample from. Then we 
can use a different instrumental distribution $r_0$ and compute weights  $w_0^i=g_0(\xi_0^i)\nu(\xi_0^i)/r_0(\xi_0^i)$.
However, in many  scenarios the initial configuration is deterministic ($\nu$ is concentrated at a single configuration) and then this 
problem disappears.
In algorithm~\ref{alg:pgas} we summarize the particle Gibbs with ancestor sampling (PGAS).
\begin{algorithm}[hp!]
 \caption{PGAS for sampling the new skeleton.}
 \label{alg:pgas}
 \begin{algorithmic}
 \STATE \textbf{Input:} Current skeleton $s_{0:n}$.
 \STATE \textbf{Output:} New skeleton $s^\prime_{0:n}$.
 \STATE Set $\xi^N_{0:n}=s_{0:n}$.
 \STATE Compute weights $w_k^N=g_k(s_k)$ for $k=0,...,n$.
 \FOR {$i=1,\dots, N-1$}
 \STATE Draw $\xi_0^i\sim \nu(\cdot)$.
 \STATE Compute corresponding weights $w_0^i=g_0(\xi_0^i)$.
 \ENDFOR
 \FOR {$k=1,\dots,n$}
 \STATE \textbf{(Resampling)} 
 \STATE Draw ancestors \[\{a^i_k\}_{i=1}^N\sim Multinomial\left(N,\frac{w_{k-1}^1}{\sum_j w_{k-1}^j},\dots,\frac{w_{k-1}^N}{\sum_j w_{t-1}^j}\right)\]
 and set
 \[\tilde\xi_{0:k-1}^i=\xi_{0:k-1}^{a^i_k}\;.\]
 \STATE \textbf{(Ancestor resampling)}
 \STATE Draw $J$ with probability
 \[\mathbb{P}(J=i)=\frac{w_{k-1}^i P(\tilde \xi_{k-1}^i,s_k)}{\sum_{j=1}^N w_{k-1}^j P(\tilde \xi_{k-1}^j,s_k)}\;,\]
 and set
 \[\xi^N_{0:n}=(\tilde\xi_{0:k-1}^J,s_{k:n})\;.\]
 \STATE \textbf{(Propagation step)}
 \FOR {$i=1,\dots, N-1$}
 \STATE Draw $\xi^i_k\sim P(\tilde\xi_{k-1}^i,\cdot\;)$ and set $\xi_{0:k}^i=(\tilde\xi_{0:k-1}^i,\xi_k^i)$.
 \STATE Compute weights $w_i^k=g_k(\xi_k^i)$.
 \ENDFOR
 \ENDFOR
 \STATE Draw $I$ with probability
 \[\mathbb{P}(I=i)=\frac{w_{n}^i  }{\sum_{j=1}^N w_{n}^j}\;,\]
 and set
 \[s^\prime_{0:n}=\xi^I_{0:n}\;.\]
 \end{algorithmic}
\end{algorithm}
\begin{remark}
 There exists also a Metropolis type of particle MCMC algorithm (PMH). However, straightforward application of PMH within MCMC algorithm for 
 Markov jump processes is not possible, because the acceptance probability in PMH depends on estimates of the normning constant $Z_n$ 
 obtained in two consecutive steps. In the case of 
 Markov jump processes, the dimension  of the state space ($n+1$) can be different in every step, because virtual jumps are
 either added or removed. Therefore to implement PMH in this context, transdimensional Metropolis type moves would be needed.
\end{remark}

\newpage

\section{Numerical experiments}\label{sec:num}  

In this section we present results of simulations which demonstrate the efficiency of the proposed algorithm. 
We concentrate on the case of CTBNs. We start with a toy example with only two nodes joined by a single arrow $(X\to Y)$, i.e.\ the simplest
hidden Markov model with continuous time.
Next we consider a CTBN with a chain structure. The last example is the Lotka-Volterra predator-prey model.
To the best of our knowledge the algorithm of \citet{RaoTeh2013a} is the most efficient method to deal with hidden continuous 
time Markov processes. For this reason we use it in our comparisons.
In the simulations we use RCPP \citep{eddelbuettel2011rcpp} implementation of algorithms. All our codes are available at a request.
\subsection{Toy example}

Consider a CTBN presented in Figure~\ref{fig:toy}. Both nodes $(X,Y)$ have 
two possible states, $\{1,2\}$.  Node $Y$ is fully observed, $X$ is hidden (apart from the beginning and the end of its trajectory).
The transition intensities of $X$ are independent of $Y$ and given by 
\[Q_X=\left(\begin{array}{cc}
             -10&10\\
             10&-10
            \end{array}\right)\;.\]
The conditional intensities of $Y$ are given by
\[Q_{Y|X=1}=\left(\begin{array}{cc}
             -10&10\\
             10&-10
            \end{array}\right),\quad Q_{Y|X=2}=\left(\begin{array}{cc}
             -100&100\\
             100&-100
            \end{array}\right)\;.\]

\begin{figure}
\centering
 \includegraphics[width=.8\textwidth]{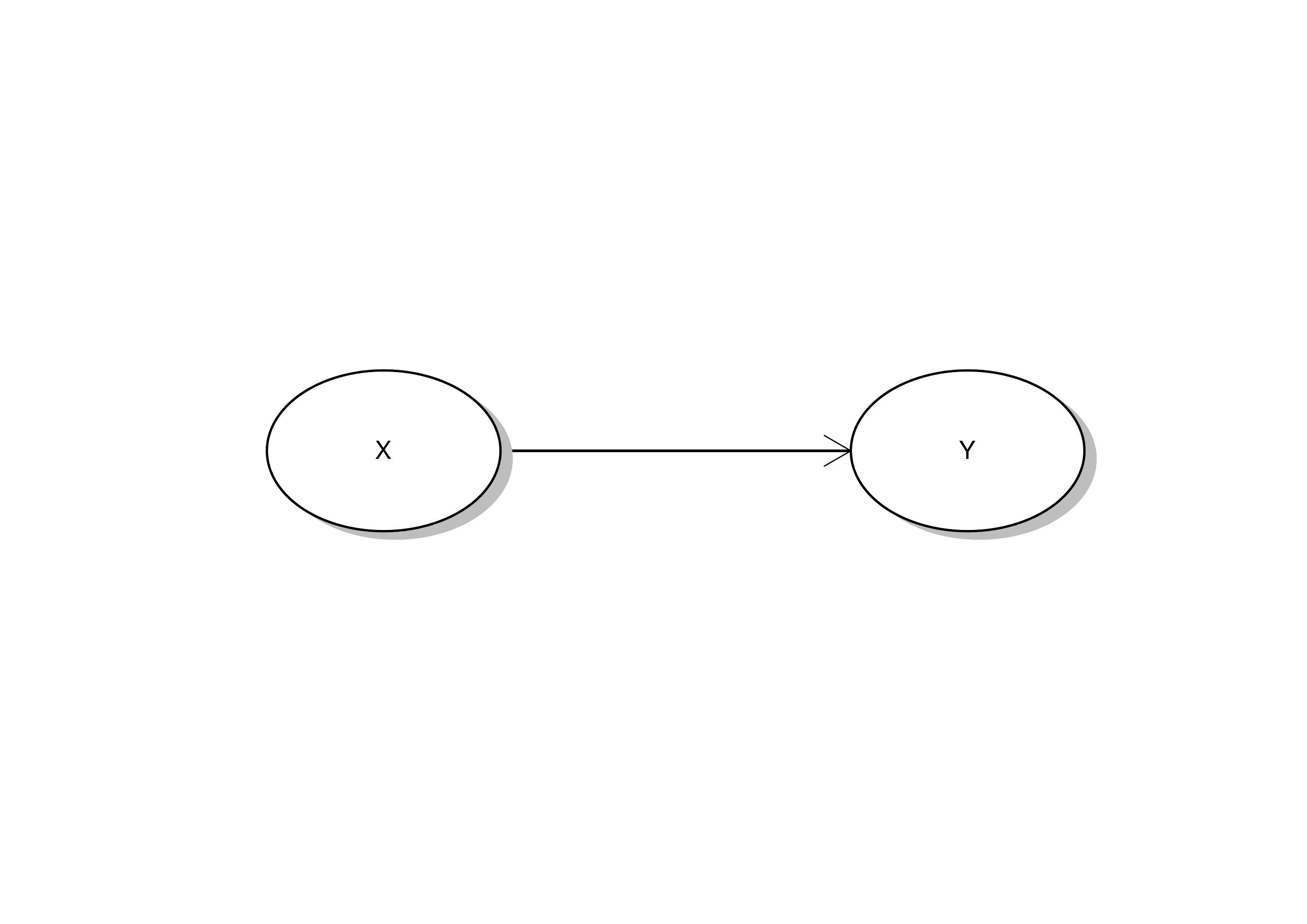}
  \caption{Toy example}
  \label{fig:toy}
\end{figure} 
Assume that we observe the full trajectory of $Y$ in time interval $[0,1]$ and we also observe $X$ at moments $0$ and $1$. 
Our goal is to sample from the posterior distribution of $X([0,1])$ given the evidence $(X(0),X(1),Y([0,1]))$. 
We run our particle MCMC algorithm in two versions. Times of virtual jumps are added via uniformization with $\lambda=20$ in the first version
and distributed according to the homogeneous Poisson process with intensity $\theta=10$ in the second version. 
In both cases the expected number of virtual jumps is approximately the same. We run our algorithm with PGAS with $2$ and $4$ particles.
For a comparison we apply \citet{RaoTeh2013a} algorithm with FFBS subroutine. The actual (hidden) trajectory $X([0,1])$,
the posterior means and standard deviation of MCMC approximation are presented in Figure \ref{fig:toyplot}.
These results are based on $100$ replications, each MCMC run of length $1000$ with a burn-in time $100$.
\begin{figure}
 \centering
 \includegraphics[width=.8\textwidth]{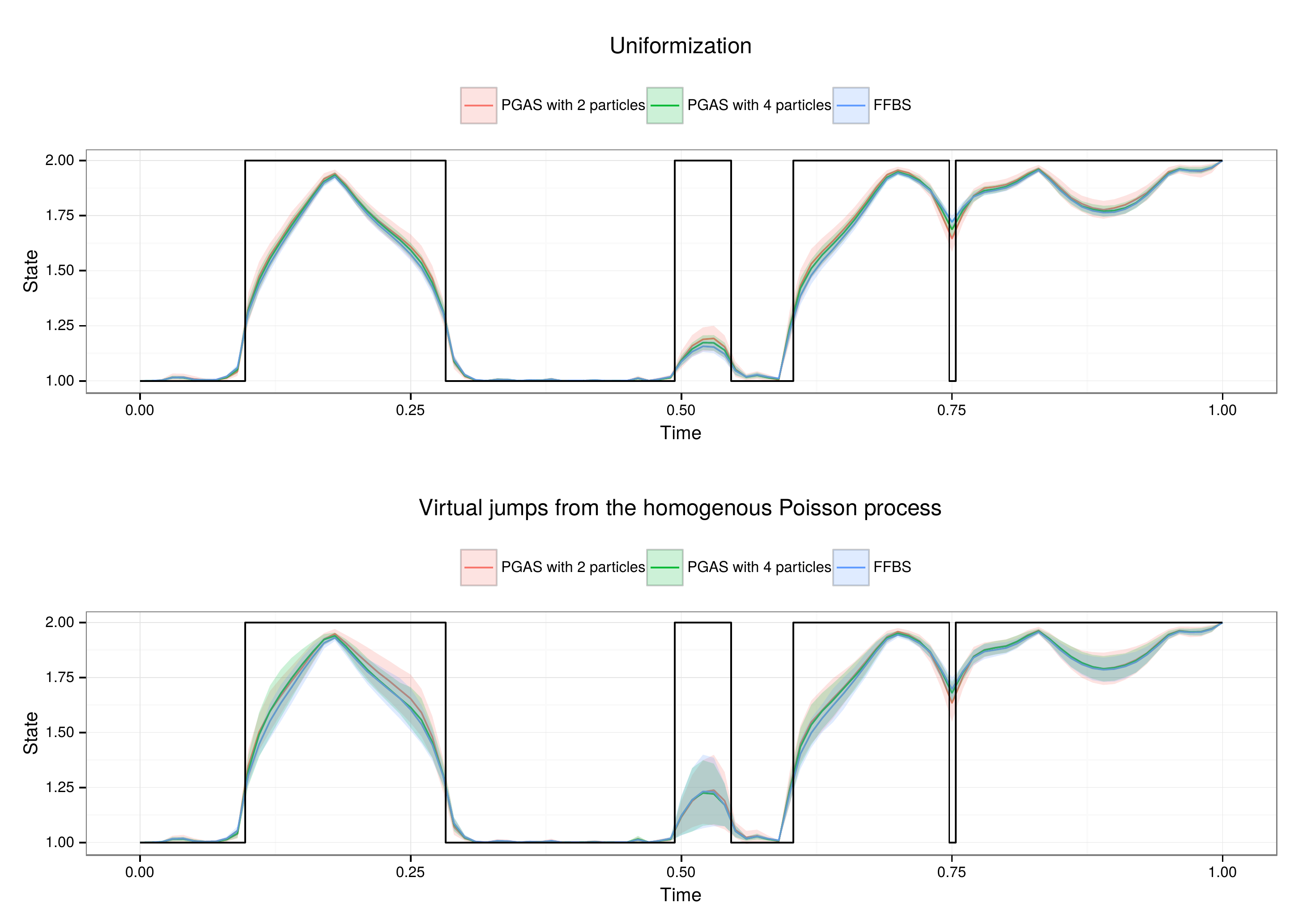}
  \caption{Posterior mean and standard deviation of MCMC approximation for the toy example. }
  \label{fig:toyplot}
\end{figure}
We observe that the estimated trajectory is very similar for all the compared methods. 
The difference is in the variance and consequently in the width of the ``confidence bands''.
In this example the algorithm with uniformization outperforms the algorithm with homogeneous virtual jumps.
As it was expected, our algorithms with PGAS have the variance greater than those with FFBS but 
the difference between PGAS with $4$ particles and FFBS is not too big. 
\begin{figure}
 \centering
 \includegraphics[width=.8\textwidth]{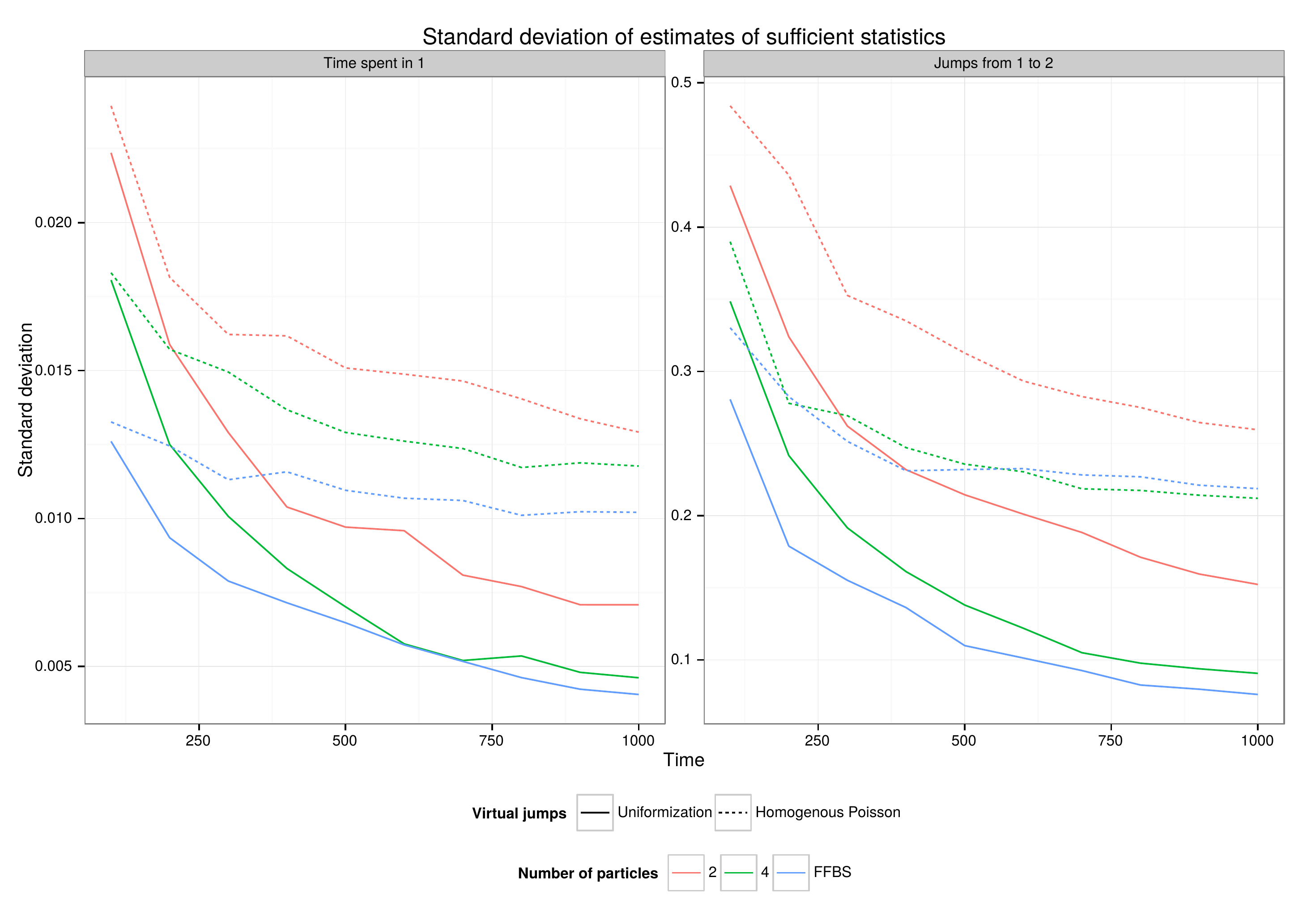}
  \caption{Standard deviation of sufficient statistics versus number of iterations of MCMC algorithm.}
  \label{fig:toystat}
\end{figure}
To analyze the rate of convergence of the algorithms we also compute standard deviation of sufficient statistics 
(number of jumps and time spent in each state). The results are shown in Figure~\ref{fig:toystat}. 
Again we obtain similar conclusions. It is clearly visible that the algorithms with FFBS have lower variance than those with PGAS. 
However, the difference is rapidly decreasing with increasing number of particles.
Approximately the cost of FFBS is of order $\mathcal{O}(|\X|^2 \mathbb{E}(n))$ and the cost of PGAS is $\mathcal{O}(N \mathbb{E}(n))$. 
Hence in the example under consideration, we obtain comparable quality of estimation for the algorithms with FFBS and PGAS which have 
the same computational cost.

\subsection{Chain network}

The model considered in this subsection is based on \citep[Subsection 5.5]{RaoTeh2013a}.
It is a network which consists of $M$ nodes equipped with the ``chain'' graph $1\to 2\to\cdots\to M$.  For each node the set of possible states is
$\{1,\dots, S\}$. 
The transition intensities of every node, except the first one, depend on current state of the previous node.
Namely, off-diagonal elements of intensity matrix are given by
                 
\begin{align*} Q_1(x_1,{x_1^\prime})&=\begin{cases}
                    \frac{1}{2}&\text{if } x_1^\prime= (x_1+1)\ \rm {mod}\ S\;;\\
                    \frac{1}{2(S-2)}&\text{ otherwise}\;,
                   \end{cases} \\
                   Q_m(x_{m-1};x_m,{x_m^\prime})&=\begin{cases}
                    \frac{1}{S-1} &\text{ if } x_m=x_{m-1}\;, \\
                    1             &\text{ if }  x_{m-1}\not=x_m \text{ and } x^\prime_m=x_{m-1} \;;\\
                    \frac{1}{S-2}&\text{ otherwise}.  
                   \end{cases}
                   \end{align*} 
In words, the head node leaves the current state $x_1$ with intensity $1$ and prefers to jump to $x_1+1\text{ mod } S$. For any other node, 
if its current state $x_m$ differs from that of parent state $x_{m-1}$ then intensity of jumping is $2$ and the process 
prefers to jump to $x_{m-1}$. If $x_m=x_{m-1}$ then the process leaves the current state with intensity $1$ and 
chooses a new state at random.
We observe the process at the beginning and at the end of time interval $[0,T]$. 

We run two MCMC algorithms targeting the posterior distribution over 
latent trajectories. Our algorithm with PGAS based on $N=10$ particles is compared with Rao and Teh's algorithm with FFBS.  
Both the algorithms use  uniformization, with $\lambda$ equal to twice the intensity of leaving the current state. 

We begin with a network with $M=3$ nodes, $S=2$ states of every node and time length $T=5$. In the first series of our experiments,
we increase the number of nodes to $M=6,12,24$. In the second series we increase the length of time interval to $T=10,20,40$. Finally
we change the size of the state space to  $S=10,50,100$. For each of the scenarios we run 20 replications of each of
the MCMC algorithms. 
We use CODA R package \citep{coda} to estimate the effective sample size of the following  statistics: time spent in each state and number of jumps,
for nodes $m=1,\dots,M$.  The median of all these quantities serves as an estimator of ESS for the whole CTBN. 
\begin{figure}
 \centering
 \includegraphics[width=.8\textwidth]{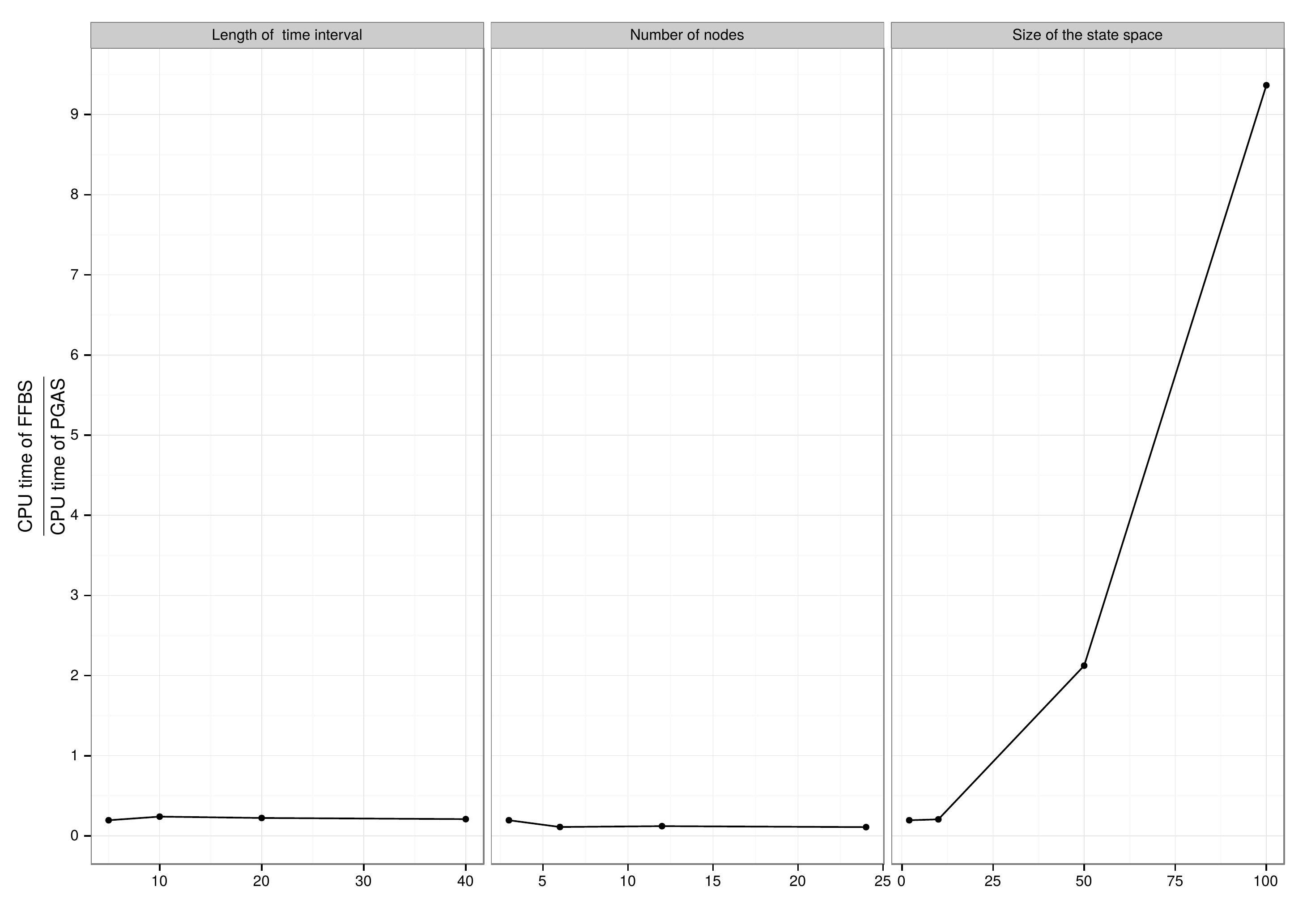}
  \caption{Ratio of CPU time (FFBS/PGAS) needed to generate a sample with $ESS=100$ for the chain model: with increasing length of time interval (left),
  increasing number of nodes (center), increasing size of the state space (right). }
  \label{fig:ratio}
\end{figure}
In Figure~\ref{fig:ratio} we present the ratio of CPU time needed to generate a sample with ESS equal to $100$. 
In the beginning, when state space is of size $S=2$, FFBS is more effective,  around 4-5 times as fast as PGAS. 
This is what should be expected, since the cost of PGAS with $10$ particles is higher than the cost of FFBS for a small 
space size. Also as expected, the ratio does not change significantly if the number of nodes increases. 
The same seems to happen if we increase the length of time interval. 
This last fact is slightly surprising, because the rate of convergence of particle Gibbs depends on the number of jumps, see 
\citet{andrieu2013uniform,lindsten2014uniform}. However, if the size of the state space increases then the cost of our 
proposed method becomess significantly lower than that
of \citet{RaoTeh2013a} approach. For instance if $S=50$ then our algorithm is twice as fast as  \citet{RaoTeh2013a} and for $S=100$ 
it is more than 9 times as fast. Experiments with a different number of particles ($N=5,20,50$) lead to the same conclusions. 

\subsection{Lotka-Volterra model}
\begin{figure}
 \centering
 \includegraphics[width=.8\textwidth]{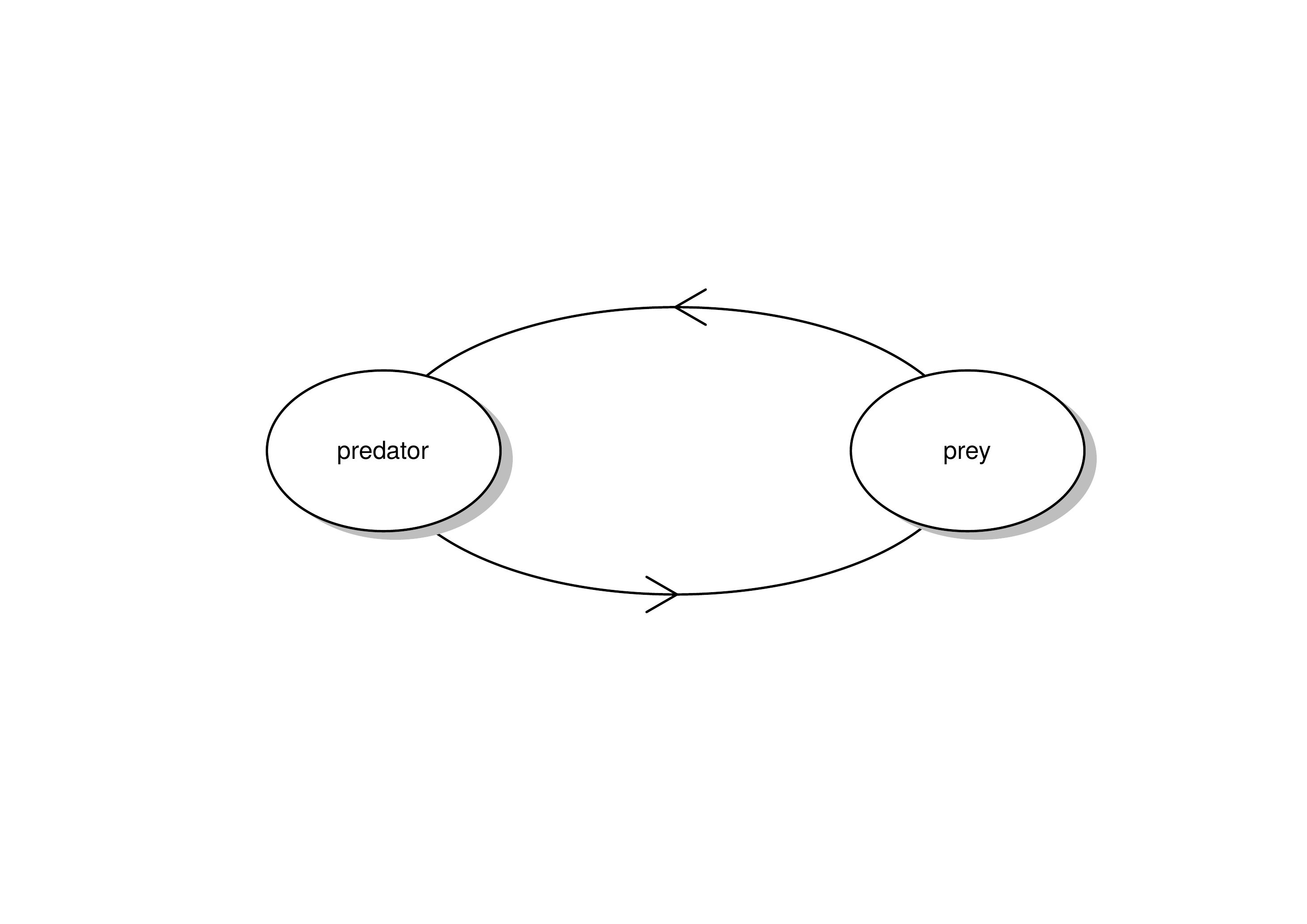}
  \caption{Predator-Prey model as CTBN. }
  \label{fig:lv}
\end{figure}

The last example is Lotka-Volterra model \citep{wilkinson2009stochastic,opper2008variational}, which describes evolution of two interacting populations 
of prey and predator species. This process can be viewed as the two node CTBN shown in Figure~\ref{fig:lv}. 
Let $x$ and $y$ represent the size of prey and predator populations, respectively.
The transition intensities are given by
\begin{align*}
 Q(\{x,y\},\{x+1,y\})&=\alpha x\;,\quad Q(\{x,y\},\{x-1,y\})=\beta xy\;,\\
 Q(\{x,y\},\{x,y+1\})&=\delta xy\;,\quad Q(\{x,y\},\{x,y-1\})=\gamma y\;,
\end{align*}
All other intensities are $0$.  The state space is infinite: $\{0,1,\ldots\}\times \{0,1,\ldots\}$.
Following \citet{RaoTeh2013a} we set the parameters as follows:
$\alpha=\gamma=5\times10^{-4}$ and $\beta=\delta=1\times10^{-4}$. 
We condsider the process in time interval $[0,3000]$ with known initial position and noisy observations $Y(t)$ at discrete times uniformly spaced in interval
$[0,1500]$ with the likelihood given by
\[p(Y(t)|X(t))\propto \left[ 2^{|X(t)-Y(t)|}+10^{-6}\right]^{-1}\;.\]
Given this evidence, we estimate the posterior distribution over sample paths of $X$ using our sampler.
We compute estimates of $X$ in the interval $[0,1500]$ (where observations $Y$ are available) and also predict values of $X$ in the interval 
$[1500,3000]$.  Due to unboudedness of intensities we are not able to 
use uniformization technique and so we use homogenous virtual jumps with $\theta=30$. For this choice of $\theta$, the number of virtual jumps is 
approximately equal to the number of true jumps. We run MCMC simulation of length $1000$ with $100$ initial iterations 
treated as burn-in time. 
In the PGAS we use $100$ particles.  The results of simulation are given in Figure~\ref{fig:lvplot}. 
\begin{figure}
 \centering
 \includegraphics[width=.8\textwidth]{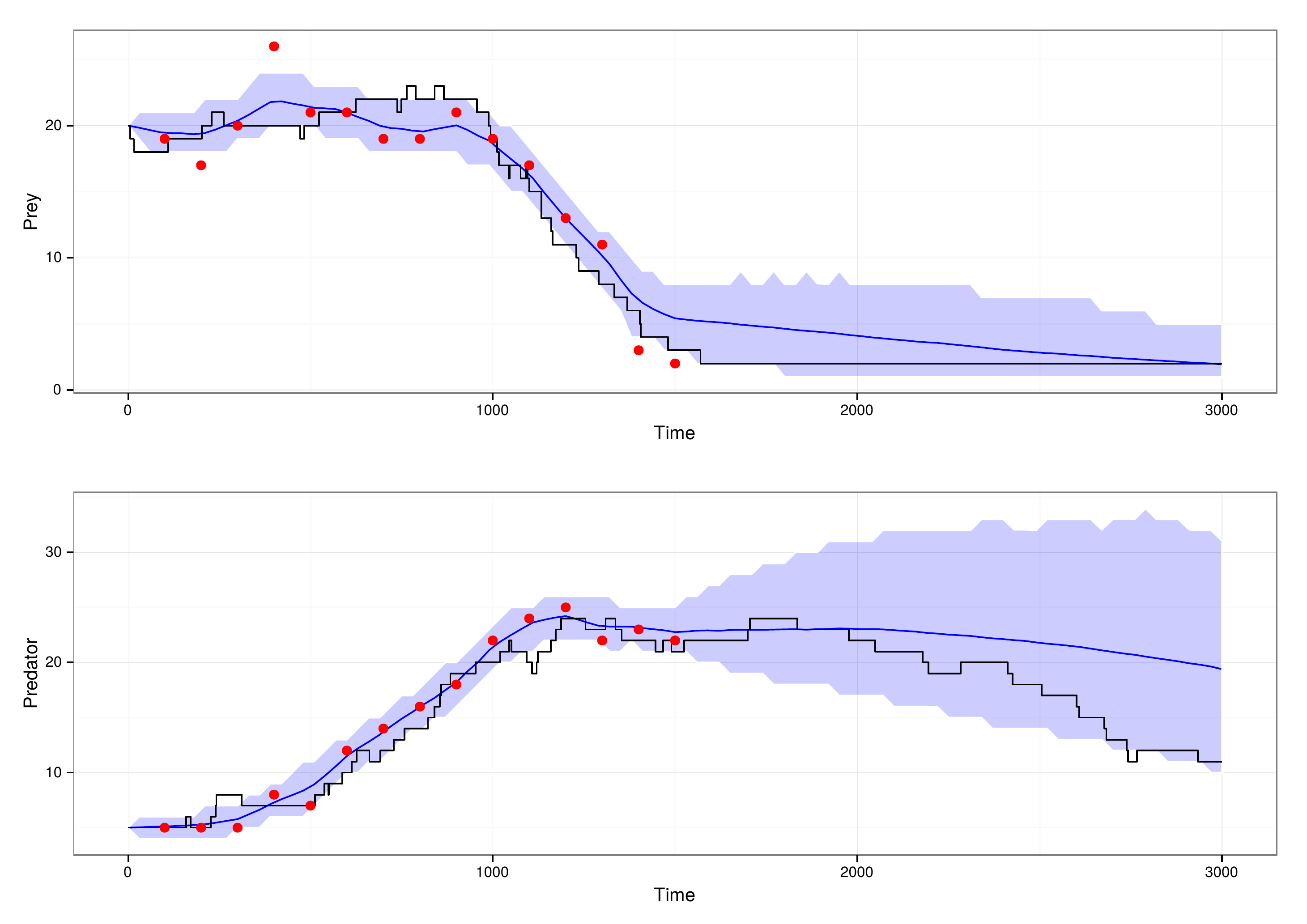}
  \caption{Resuts of MCMC aproximation for Predator-Prey model. Black line - true path, blue line - posterior mean, shadow - 90\% credible interval,  red points - observations. }
  \label{fig:lvplot}
\end{figure}
We conclude that the quality of estimates is the same as in \citet{RaoTeh2013a}.  
Note that in opposition to Rao and Teh's algorithm we need not truncate the state space and the computational cost of our algorithm is significantly lower, 
namely $\mathcal{O}(100\mathbb{E}(n))$ for our method and $\mathcal{O}(200^2\mathbb{E}(n))$ for Rao and Teh, with state space truncted to 
$\{0,\dots,200\}\times\{0,\dots,200\}$ as suggested by these authors.

\section{Conclusions}\label{sec:con}
In the present paper we propose a new MCMC algorithm for sampling from the posterior distribution of hidden trajectory of
a Markov jump process. 
The general idea is the same as in \citet{RaoTeh2013a}, namely we alternately add virtual jumps and update the skeleton
of the process. The main novelty is that our algorithm uses PGAS to sample the skeleton, while Rao and Teh use FFBS.
Thus instead of sampling exactly from a conditional distribution, we make a step of a Markov chain which preserves this 
distribution. This modification has some disadvantages, as slower mixing of the entire procedure. However, there are also
important advantages of our approach. Unlike previous methods our algorithm can be implemented 
even if the state space is infinite.  The cost of a single step of the  proposed algorithm 
does not depend on the size of the state space. Consequently, we  can recommend our algorithm for problems where the space
is either infinite or finite but large. If the size of the state space is not big, then our algorithm with PGAS converges slightly 
slower than the algorithm with FFBS. However, the difference between them is rapidly diminishing if we increase the number     
of particles in PGAS. 

In the present paper we describe an algorithm for time homogeneous processes, only to avoid too many technical details.  However,
the generalization to non-homogeneous processes is rather straightforward. 
For details of adding virtual jumps in non-homogeneous case we refer to
\citet{rao2012mcmc}. Since PGAS can easly deal with general state spaces, our algorithm can also be applied to piecewise deterministic 
Markov jump processes on general state spaces (i.e.\ processes which 
evolve deterministically between jumps and move according to some Markov kernel at moments of jumps). 

We note that an important issue is to find an optimal number of virtual jumps. Small number of virtual jumps can lead to poor mixing 
of moments of jumps. In the case of PGAS, large number of virtual jumps not only increases computational cost, as it is in the case of FFBS, 
but may have a negative impact on mixing of the whole algorithm. It is because the convergence rate of particle Gibbs depends on 
the length of simulated trajectory.

\appendix
\section*{Appendix A.}
\label{app:proofs}
\begin{proof}[Proof of Proposition~\ref{prop:density}] We are to check that if $(\tilde{T},\tilde{S})$ has the distribution given by 
\eqref{eq:density_with_virtual} then $(\tilde{T}_J,\tilde{S}_J)$ is distributed according to \eqref{eq:density}.
First we compute the distribution of waiting time for the next true jump. Without loss of generality we can assume that the 
previous jump occurred at time $0$ and 
$X(0)=s$. To get the first true jump we generate subsequent moments of potential jumps 
$\tilde t_1,\dots,\tilde t_i\dots$  such that $\tilde t_i-\tilde t_{i-1}$ are i.i.d.\ from $Exp(R(s))$. 
The candidate is accepted with probability ${Q(s)}/{R(s)}$. Hence
\begin{align*}
 \mathbb{P}(t_1\leq t)&=\sum_{k=1}^\infty\mathbb{P}(\tilde t_k<t)\left(1 -\frac{Q(s)}{R(s)}\right)^{k-1}\frac{Q(s)}{R(s)}\\
                      &=\sum_{k=1}^\infty\mathbb{P}\left(\sum_{i=1}^k(\tilde t_i-\tilde t_{i-1})<t\right)\left(1 -\frac{Q(s)}{R(s)}\right)^{k-1}\frac{Q(s)}{R(s)}\\
&=\sum_{k=1}^\infty\int_0^t\frac{R(s)^k}{(k-1)!}u^{k-1}\exp\{-R(s)u\}\rm d u\left(1 -\frac{Q(s)}{R(s)}\right)^{k-1}\frac{Q(s)}{R(s)}\\
&=\int_0^t\sum_{k=1}^\infty\frac{(R(s)-Q(s))^{k-1}u^{k-1}}{(k-1)!}\exp\{-R(s)u\}\d u\; Q(s)\\
&=\int_0^t\exp\left\{\left[R(s)-Q(s)\right]u-R(s)u \right\}\d u\; Q(s)\\
&=\int_0^t Q(s)\exp\{-Q(s)u\}\d u =1- \exp\{-Q(s)t\}\;.
\end{align*}
We have obtained an expression which is exactly the c.d.f.\  of waiting time for the next jump of process with intensity matrix $Q$.
To conclude the proof, it is enough to note that {
\begin{align*}
\mathbb{P}(s_1=s^\prime|s_0= s)&=\mathbb{P}(\tilde s_1=\tilde s^\prime|\tilde s_0=\tilde s,\tilde s_0\neq\tilde  s_1)\\
&=\frac{{Q(\tilde s,\tilde s^\prime)}/{R(\tilde s)}}{{Q(\tilde s)}/{R(\tilde s)}}=\frac{Q(\tilde s,\tilde s^\prime)}{Q(\tilde s)}\;.
\end{align*} }
\end{proof}

\begin{proof}[Proof of Corollary~\ref{cor:density}]
By construction of the thinning procedure and by Proposition~\ref{prop:density} we have 
\begin{align*}
p(V_j|X(t_{j-1})=s,t_{j-1},t_j)&=\frac{p(V_j,t_{j-1},t_j|s)}{p(t_{j-1},t_j|s)}\\&=
\frac{R(s)^{|V_j|+1}\displaystyle\frac{(R(s)-Q(s))^{|V_j|}Q(s)}{R(s)^{|V_j|+1}}\exp\{-(t_j-t_{j-1})R(s)\}}{Q(s)\exp\{-(t_j-t_{j-1})Q(s)}\\
&=(R(s)-Q(s))^{|V_j|}\exp\left\{-(t_{j}-t_{j-1}) (R(s)-Q(s))\right\} \;.\phantom{\displaystyle A^{A^{A^{A}}}}
\end{align*}
\end{proof}

\bibliography{refs}

\begin{thebibliography}{35}
\providecommand{\natexlab}[1]{#1}
\providecommand{\url}[1]{\texttt{#1}}
\expandafter\ifx\csname urlstyle\endcsname\relax
  \providecommand{\doi}[1]{doi: #1}\else
  \providecommand{\doi}{doi: \begingroup \urlstyle{rm}\Url}\fi

\bibitem[Andrieu et~al.(2010)Andrieu, Doucet, and
  Holenstein]{andrieu2010particle}
Christophe Andrieu, Arnaud Doucet, and Roman Holenstein.
\newblock Particle markov chain monte carlo methods.
\newblock \emph{Journal of the Royal Statistical Society: Series B (Statistical
  Methodology)}, 72\penalty0 (3):\penalty0 269--342, 2010.

\bibitem[Andrieu et~al.(2013)Andrieu, Lee, and Vihola]{andrieu2013uniform}
Christophe Andrieu, Anthony Lee, and Matti Vihola.
\newblock Uniform ergodicity of the iterated conditional smc and geometric
  ergodicity of particle gibbs samplers.
\newblock \emph{arXiv preprint arXiv:1312.6432}, 2013.

\bibitem[Boys et~al.(2008)Boys, Wilkinson, and Kirkwood]{BoysWilkKirk2008}
Richard~J Boys, Darren~J Wilkinson, and Thomas~BL Kirkwood.
\newblock Bayesian inference for a discretely observed stochastic kinetic
  model.
\newblock \emph{Statistics and Computing}, 18\penalty0 (2):\penalty0 125--135,
  2008.

\bibitem[Chopin and Singh(to appear 2015)]{chopin2013particle}
Nicolas Chopin and Sumeetpal~S Singh.
\newblock On the particle gibbs sampler.
\newblock \emph{Bernoulli}, to appear 2015.

\bibitem[Cinlar(1975)]{cinlar2013introduction}
{Erhan} Cinlar.
\newblock \emph{Introduction to stochastic processes}.
\newblock Prentice-Hall, 1975.
\newblock ISBN 0134980891.

\bibitem[Cohn et~al.(2010)Cohn, El-Hay, Friedman, and Kupferman]{cohn2010mean}
Ido Cohn, Tal El-Hay, Nir Friedman, and Raz Kupferman.
\newblock Mean field variational approximation for continuous-time bayesian
  networks.
\newblock \emph{The Journal of Machine Learning Research}, 11:\penalty0
  2745--2783, 2010.

\bibitem[Del~Moral et~al.(2006)Del~Moral, Doucet, and Jasra]{del2006sequential}
Pierre Del~Moral, Arnaud Doucet, and Ajay Jasra.
\newblock Sequential monte carlo samplers.
\newblock \emph{Journal of the Royal Statistical Society: Series B (Statistical
  Methodology)}, 68\penalty0 (3):\penalty0 411--436, 2006.

\bibitem[Doucet and Johansen(2009)]{doucet2009tutorial}
Arnaud Doucet and Adam~M Johansen.
\newblock A tutorial on particle filtering and smoothing: Fifteen years later.
\newblock \emph{Handbook of Nonlinear Filtering}, 12:\penalty0 656--704, 2009.

\bibitem[Doucet et~al.(2001)Doucet, {De Freitas}, and Gordon]{Freitas}
Arnaud Doucet, Nando {De Freitas}, and Neil Gordon.
\newblock An introduction to sequential monte carlo methods.
\newblock In \emph{Sequential Monte Carlo methods in practice}, pages 3--14.
  Springer, 2001.

\bibitem[Eddelbuettel et~al.(2011)Eddelbuettel, Fran{\c{c}}ois, Allaire,
  Chambers, Bates, and Ushey]{eddelbuettel2011rcpp}
Dirk Eddelbuettel, Romain Fran{\c{c}}ois, J~Allaire, John Chambers, Douglas
  Bates, and Kevin Ushey.
\newblock Rcpp: Seamless r and c++ integration.
\newblock \emph{Journal of Statistical Software}, 40\penalty0 (8):\penalty0
  1--18, 2011.

\bibitem[El-Hay et~al.(2008)El-Hay, Friedman, and Kupferman]{EFK}
Tal El-Hay, Nil Friedman, and Raz Kupferman.
\newblock Gibbs sampling in factorized continuous-time markov processes.
\newblock In \emph{Proceedings of the Twenty-Fourth Conference Annual
  Conference on Uncertainty in Artificial Intelligence (UAI-08)}, pages
  169--178, Corvallis, Oregon, 2008. AUAI Press.

\bibitem[Fan and Shelton(2008)]{FaSh}
Yu~Fan and Christian~R. Shelton.
\newblock Sampling for approximate inference in continuous time {B}ayesian
  networks.
\newblock In \emph{Tenth International Symposium on Artificial Intelligence and
  Mathematics}, 2008.

\bibitem[Fan et~al.(2010)Fan, Xu, and Shelton]{FaXuSh}
Yu~Fan, Jing Xu, and Christian~R. Shelton.
\newblock Importance sampling for continuous time {B}ayesian networks.
\newblock \emph{Journal of Machine Learning Research}, 11\penalty0
  (Aug):\penalty0 2115--2140, 2010.

\bibitem[Fearnhead and Sherlock(2006)]{FarSher2006}
Paul Fearnhead and Chris Sherlock.
\newblock An exact gibbs sampler for the markov-modulated poisson process.
\newblock \emph{Journal of the Royal Statistical Society: Series B (Statistical
  Methodology)}, 68\penalty0 (5):\penalty0 767--784, 2006.
\newblock ISSN 1467-9868.
\newblock \doi{10.1111/j.1467-9868.2006.00566.x}.

\bibitem[Hobolth and Stone(2009)]{hobolth2009}
Asger Hobolth and Eric~A. Stone.
\newblock Simulation from endpoint-conditioned, continuous-time markov chains
  on a finite state space, with applications to molecular evolution.
\newblock \emph{The Annals of Applied Statistics}, 3\penalty0 (3):\penalty0
  1204--1231, 2009.

\bibitem[Jensen(1953)]{jensen1953markoff}
Arne Jensen.
\newblock Markoff chains as an aid in the study of markoff processes.
\newblock \emph{Scandinavian Actuarial Journal}, 1953\penalty0 (sup1):\penalty0
  87--91, 1953.

\bibitem[Lauritzen(2001)]{Lauritzen01causalinference}
Steffen~L Lauritzen.
\newblock Causal inference from graphical models.
\newblock \emph{Complex stochastic systems}, pages 63--107, 2001.

\bibitem[Lewis and Shedler(1979)]{1979}
P.~A.~W. Lewis and G.~S. Shedler.
\newblock Simulation of nonhomogeneous poisson processes with degree-two
  exponential polynomial rate function.
\newblock \emph{Operations Research}, 27\penalty0 (5):\penalty0 pp. 1026--1040,
  1979.

\bibitem[Lindsten and Sch{\"o}n(2013)]{lindsten2013backward}
Fredrik Lindsten and Thomas~B Sch{\"o}n.
\newblock Backward simulation methods for monte carlo statistical inference.
\newblock \emph{Foundations and Trends in Machine Learning}, 6\penalty0
  (1):\penalty0 1--143, 2013.

\bibitem[Lindsten et~al.(2012)Lindsten, Sch{\"o}n, and
  Jordan]{lindsten2012ancestor}
Fredrik Lindsten, Thomas Sch{\"o}n, and Michael~I Jordan.
\newblock Ancestor sampling for particle gibbs.
\newblock In \emph{Advances in Neural Information Processing Systems}, pages
  2591--2599, 2012.

\bibitem[Lindsten et~al.(2014{\natexlab{a}})Lindsten, Douc, and
  Moulines]{lindsten2014uniform}
Fredrik Lindsten, Randal Douc, and Eric Moulines.
\newblock Uniform ergodicity of the particle gibbs sampler.
\newblock \emph{arXiv preprint arXiv:1401.0683}, 2014{\natexlab{a}}.

\bibitem[Lindsten et~al.(2014{\natexlab{b}})Lindsten, Jordan, and
  Sch{\"o}n]{lindsten2014particle}
Fredrik Lindsten, Michael~I Jordan, and Thomas~B Sch{\"o}n.
\newblock Particle gibbs with ancestor sampling.
\newblock \emph{The Journal of Machine Learning Research}, 15\penalty0
  (1):\penalty0 2145--2184, 2014{\natexlab{b}}.

\bibitem[Miasojedow et~al.(2014)Miasojedow, Niemiro, Noble, and
  Opalski]{CTBNMet2014}
Blazej Miasojedow, Wojciech Niemiro, John Noble, and Krzysztof Opalski.
\newblock Metropolis-type algorithms for continuous time bayesian networks.
\newblock \emph{arXiv preprint arXiv:1403.4035}, 2014.

\bibitem[Nodelman et~al.(2002{\natexlab{a}})Nodelman, Shelton, and
  Koller]{Nod1}
Uri Nodelman, Christian~R Shelton, and Daphne Koller.
\newblock Continuous time bayesian networks.
\newblock In \emph{Proceedings of the Eighteenth conference on Uncertainty in
  artificial intelligence}, pages 378--387, 2002{\natexlab{a}}.

\bibitem[Nodelman et~al.(2002{\natexlab{b}})Nodelman, Shelton, and
  Koller]{Nod2}
Uri Nodelman, Christian~R Shelton, and Daphne Koller.
\newblock Learning continuous time bayesian networks.
\newblock In \emph{Proceedings of the Nineteenth conference on Uncertainty in
  Artificial Intelligence}, pages 451--458. Morgan Kaufmann Publishers Inc.,
  2002{\natexlab{b}}.

\bibitem[Nodelman et~al.(2005)Nodelman, Koller, and Shelton]{Nod3}
Uri Nodelman, Daphne Koller, and Christian~R Shelton.
\newblock Expectation propagation for continuous time bayesian networks.
\newblock In \emph{Proceedings of the Twenty-first Conference on Uncertainty in
  AI (UAI)}, pages 431--440, Edinburgh, Scottland, UK, July 2005.

\bibitem[Norris(1998)]{Nor98}
James~R Norris.
\newblock \emph{Markov chains}.
\newblock Number~2. Cambridge university press, 1998.

\bibitem[Opper and Sanguinetti(2008)]{opper2008variational}
Manfred Opper and Guido Sanguinetti.
\newblock Variational inference for markov jump processes.
\newblock In \emph{Advances in Neural Information Processing Systems}, pages
  1105--1112, 2008.

\bibitem[Pitt and Shephard(1999)]{pitt1999filtering}
Michael~K Pitt and Neil Shephard.
\newblock Filtering via simulation: Auxiliary particle filters.
\newblock \emph{Journal of the American statistical association}, 94\penalty0
  (446):\penalty0 590--599, 1999.

\bibitem[Plummer et~al.(2006)Plummer, Best, Cowles, and Vines]{coda}
Martyn Plummer, Nicky Best, Kate Cowles, and Karen Vines.
\newblock Coda: Convergence diagnosis and output analysis for mcmc.
\newblock \emph{R News}, 6\penalty0 (1):\penalty0 7--11, 2006.

\bibitem[Rao and Teh(2012)]{rao2012mcmc}
Vinayak Rao and Yee~W Teh.
\newblock Mcmc for continuous-time discrete-state systems.
\newblock In \emph{Advances in Neural Information Processing Systems}, pages
  701--709, 2012.

\bibitem[Rao and Teh(2013)]{RaoTeh2013a}
Vinayak Rao and Yee~W Teh.
\newblock Fast {MCMC} sampling for {M}arkov jump processes and extensions.
\newblock \emph{Journal of Machine Learning Research}, 14:\penalty0 3207--3232,
  2013.

\bibitem[Roberts and Rosenthal(2004)]{roberts2004general}
Gareth~O Roberts and Jeffrey~S Rosenthal.
\newblock General state space markov chains and mcmc algorithms.
\newblock \emph{Probability Surveys}, 1:\penalty0 20--71, 2004.

\bibitem[Schweder(1970)]{Sch}
Tore Schweder.
\newblock Composable markov processes.
\newblock \emph{Journal of applied probability}, 7\penalty0 (2):\penalty0
  400--410, 1970.

\bibitem[Wilkinson(2009)]{wilkinson2009stochastic}
Darren~J Wilkinson.
\newblock Stochastic modelling for quantitative description of heterogeneous
  biological systems.
\newblock \emph{Nature Reviews Genetics}, 10\penalty0 (2):\penalty0 122--133,
  2009.

\end{thebibliography}

\end{document}